\tikzset{
	place/.style={
		circle,
		thick,
		draw=blue!75,
		fill=blue!20,
		minimum size=6mm,
	},
	transitionH/.style={
		rectangle,
		thick,
		fill=black,
		minimum width=8mm,
		inner ysep=2pt
	},
	transitionV/.style={
		rectangle,
		thick,
		fill=black,
		minimum height=8mm,
		inner xsep=2pt
	}
}
\newcommand{\dng}[1]{{{\sim}#1}}
\newcommand{\gr}[1]{\mathsf{gr}({#1})}
\newcommand{\head}[1]{\mathsf{h}({#1})}
\newcommand{\body}[1]{\mathsf{b}({#1})}
\newcommand{\bodyf}[1]{\mathsf{bf}({#1})}
\newcommand{\pbody}[1]{\mathsf{b}^+({#1})}
\newcommand{\nbody}[1]{\mathsf{b}^-({#1})}
\newcommand{\comp}[1]{\mathsf{comp}({#1})}
\newcommand{\lfp}[1]{\mathsf{lfp}({#1})}
\newcommand{\tgsp}[1]{\mathsf{tg}_{sp}({#1})}
\newcommand{\tgst}[1]{\mathsf{tg}_{st}({#1})}
\newcommand{\rhs}[1]{\mathsf{rhs}_{P}({#1})}
\newcommand{\hb}[1]{\mathsf{HB}_{#1}}
\newcommand{\Top}[1]{T_{P}({#1})}
\newcommand{\Fop}[1]{F_{P}({#1})}
\newcommand{\cset}[1]{\mathcal{S}({#1})}
\newcommand{\stms}[1]{\mathsf{StM}({#1})}
\newcommand{\sums}[1]{\mathsf{SuM}({#1})}
\newcommand{\stpms}[1]{\mathsf{StPM}({#1})}
\newcommand{\supms}[1]{\mathsf{SuPM}({#1})}
\newcommand{\regms}[1]{\mathsf{RegM}({#1})}
\newcommand{\suts}[1]{\mathsf{SuTS}({#1})}
\newcommand{\stts}[1]{\mathsf{StTS}({#1})}
\newcommand{\twod}[1]{\mathbb{B}}
\newcommand{\threed}[1]{\mathbb{B}_{\star}}
\newcommand{\tval}[0]{1}
\newcommand{\fval}[0]{0}
\newcommand{\uval}[0]{\star}
\newcommand\DLN{Datalog$^\neg$\xspace}
\newcommand{\pname}[0]{\DLN program\xspace}
\newcommand{\pnames}[0]{\DLN programs\xspace}
\newcommand{\translfp}[2]{\Pi_{{#1}}{#2}}
\newcommand{\acnlp}[1]{\text{NLP}}
\newcommand{\acdg}[1]{\text{DG}}
\newcommand{\acstpm}[1]{\text{StPM}}
\newcommand{\acsupm}[1]{\text{SuPM}}
\newcommand{\acstm}[1]{\text{StM}}
\newcommand{\acsum}[1]{\text{SuM}}
\newcommand{\acregm}[1]{\text{RegM}}
\newcommand{\acbn}[1]{\text{BN}}
\newcommand{\acig}[1]{\text{IG}}
\newcommand{\acstg}[1]{\text{STG}}
\newcommand{\nlp}[0]{\text{NLP}\xspace}
\newcommand{\nlps}[0]{\text{NLPs}\xspace}
\newcommand{\af}[0]{\text{AF}\xspace}
\newcommand{\afs}[0]{\text{AFs}\xspace}
\newcommand{\ifftext}{\text{iff}\xspace}
\newcommand{\wrttext}{\text{w.r.t.}\xspace}
\newtheorem{theorem}{Theorem}[section]
\newtheorem{lemma}{Lemma}[section]
\newtheorem{proposition}{Proposition}[section]
\newtheorem{definition}{Definition}[section]
\newtheorem{example}{Example}[section]
\newtheorem{conjecture}{Conjecture}[section]
\providecommand{\keywords}[1]
{
	\small	
	\textbf{\textit{Keywords:}} #1
}
\title{On the Trap Space Semantics of Normal Logic Programs}
\author{Van-Giang Trinh \qquad\qquad Sylvain Soliman \qquad\qquad Fran\c{c}ois Fages
\institute{Inria Saclay, EP Lifeware, Palaiseau, France}
\email{\quad van-giang.trinh@inria.fr \quad\qquad Sylvain.Soliman@inria.fr \quad\qquad francois.fages@inria.fr}
\and Belaid Benhamou
\institute{LIRICA team, LIS, Aix-Marseille University, Marseille, France}
\email{\quad belaid.benhamou@lis-lab.fr}
}
\begin{document}
\maketitle

\begin{abstract}
The logical semantics of normal logic programs has traditionally been based on the notions of Clark's completion and two-valued or three-valued canonical models, including supported, stable, regular, and well-founded models.
Two-valued interpretations can also be seen as states evolving under a program's update operator, producing a transition graph whose fixed points and cycles capture stable and oscillatory behaviors, respectively.
We refer to this view as \emph{dynamical semantics} since it characterizes the program's meaning in terms of state‐space trajectories, as first introduced in the stable (supported) class semantics.
Recently, we have established a formal connection between \DLN programs (i.e., normal logic programs without function symbols) and Boolean networks, leading to the introduction
of the \emph{trap space} concept for \DLN programs.
In this paper, we generalize the trap space concept to arbitrary normal logic programs, introducing \emph{trap space semantics} as a new approach to their interpretation.
This new semantics admits both model-theoretic and dynamical characterizations, providing a comprehensive approach to understanding program behavior.
We establish the foundational properties of the trap space semantics and systematically relate it to the established model-theoretic semantics, including the stable (supported), stable (supported) partial, regular, and L-stable model semantics, as well as to the dynamical stable (supported) class semantics.
Our results demonstrate that the trap space semantics offers a unified and precise framework for proving the existence of supported classes, strict stable (supported) classes, and regular models, in addition to uncovering and formalizing deeper relationships among the existing semantics of normal logic programs.

\keywords{logic programming, normal logic program, Clark's completion, model-theoretic semantics, dynamical semantics, trap space, Boolean network, infinite structure}
\end{abstract}

\section{Introduction}\label{sec:introduction}

The meaning of a Normal Logic Program (\nlp)---its declarative semantics---is typically defined by a class of models of a logical formula associated with the program, such as the Herbrand models of the \emph{Clark's completion} of the program~\cite{Clark1977}.
This was later shown to be equivalent to the \emph{supported model semantics}~\cite{ABW1988}.
For \nlps with negation, the Clark's completion may have no model, or several minimal models; therefore,~there is generally no unique least Herbrand model.
In this setting, the \emph{stable model semantics} was introduced by~\cite{GL1988} to provide a canonical notion of two-valued models, defined by an elegant fixed‑point definition based on the least model of a \emph{reduct} transformation.
Stable models were later characterized as \emph{well-supported models}~\cite{Fages1991}, i.e.,~models with circuit-free finite justifications---which led to graphical conditions for the existence of stable models~\cite{Fages1994,AB1991}, and equivalence results with other forms of non-monotonic logics~\cite{YY1995,MS1992,Pearce2006,BS2012}.

A three-valued logic approach was introduced, where the third value denotes ignorance~\cite{GRS1991,Przymusinski1990,SZ1997}.
Subsequent research expanded the semantic landscape of \nlps through additional three-valued logic transformations~\cite{AD1995}. 
The \emph{stable (supported) partial model semantics}—also known as the three-valued stable (supported) model semantics—extends the two-valued stable (supported) model semantics~\cite{P1994,ELS1997,Przymusinski1990}. 
Under this framework, all models are minimal, and every \nlp possesses at least one stable (supported) partial model. 
Notably, two specific types of stable partial semantics are distinguished by their informational content: the \emph{well-founded partial model semantics}~\cite{GRS1988,Przymusinski1990}, representing the unique model with minimal information, and the \emph{regular model semantics}~\cite{YY1994}, encompassing models with maximal information.
It is worth noting that these three-valued semantics (e.g., stable partial models, regular models) in normal logic programs have been shown to correspond to several extension-based semantics (e.g., complete extensions, preferred extensions, respectively) in Dung's frameworks~\cite{Dung1995,WCG2009} and assumption-based argumentation~\cite{CS2017}, which are two prominent formalisms in symbolic AI.

Alternatively, an \nlp can be viewed as a state-transition system in which each rule in the program represents how a component (here an atom) is affected by other components~\cite{BS1992,BDH1997}.
The dynamical behaviors of the program, including its oscillating behavior~\cite{IS2012}, are characterized by a directed graph called the (stable or supported) \emph{transition graph} whose vertices are two-valued interpretations and whose arcs are transitions between interpretations.
This led to the \emph{stable class semantics}~\cite{BS1992} and \emph{supported class semantics}~\cite{IS2012}.
If a stable (resp.\ supported) class is of size one, it coincides with a stable (resp.\ supported) model and exhibits a stable behavior.
Otherwise, it exhibits an oscillating behavior.

The aforementioned semantics capture various aspects of program behavior.
Despite their success, they often diverge in interpretive principles, and many lack a unifying framework that integrates both model-theoretic and dynamical behaviors coherently.
Recently, we have established a formal connection between \pnames (i.e., \nlps without function symbols---thus the Herbrand base is finite) and Boolean networks, leading to the introduction of the concept of \emph{trap space} for \pnames, inspired by the notion of trap space in Boolean networks~\cite{KBS2015,TBS2023}.
This concept offers a bridge between model-theoretic and dynamical aspects, but applies only to finite structures~\cite{TBSF2025}.

In this paper, we generalize the trap space concept to arbitrary \nlps, introducing \emph{trap space semantics} as a new approach to their interpretation.
This new semantics admits both model-theoretic and dynamical characterizations, providing a comprehensive approach to understanding program behavior.
We establish foundational properties of the trap space semantics and systematically relate it to established model-theoretic and dynamical semantics---including the stable (supported), stable (supported) partial, regular, and L-stable model semantics---as well as to dynamical semantics such as the stable (supported) class semantics.
Our results demonstrate that the trap space semantics offers a unified framework (along with a proper use of set-theoretic axioms) for proving the existence of supported classes, strict supported classes, strict stable classes, or regular models---proofs that have been unfortunately either missing or imprecise in prior literature~\cite{BS1992,YY1994,IS2012}.
Furthermore, by leveraging the concept of stable or supported trap set underlying the trap space semantics, we correct a flaw in the existing alternative characterization of strict stable or supported classes established in~\cite{IS2012}.
Finally, the trap space semantics emerges as a powerful tool for uncovering and formalizing deeper relationships among various existing semantics of \nlps, for instance, the stable (resp.\ supported) class semantics vs. the stable (resp.\ supported) partial model semantics, and the regular model semantics vs. the stable class semantics and the L-stable model semantics.

The rest of the paper is organized as follows:
in Section~\ref{sec:preliminaries}, we review the necessary background on \nlps and set theory.
Section~\ref{sec:TS-defs-properties} presents the formal definitions of the trap space semantics, as well as its fundamental properties. 
Section~\ref{sec:TS-relationships} shows the relationships between the trap space semantics and other semantics.
In Section~\ref{sec:semantics-consistency}, we present the existence results.
Section~\ref{sec:conclusion} offers concluding remarks and future research directions.

\section{Preliminaries}\label{sec:preliminaries}

\subsection{Normal Logic Programs}\label{subsec:preliminaries-NLP}

We assume that the reader is familiar with the logic programming theory and stable model semantics.
Unless specifically stated, \nlp means normal logic program.
In addition, we consider the Boolean domain \(\twod{} = \{\fval, \tval\}\), the three-valued domain \(\threed{} = \{\fval, \tval, \uval\}\).
The logical connectives used in this paper are \(\land\) (conjunction), \(\lor\) (disjunction), \(\neg\) (negation), and \(\leftrightarrow\) (equivalence).

We consider a first-order language built over an infinite alphabet of variables,
and finite alphabets of constant, function, and predicate symbols.
The set of first-order \emph{terms} is the least set containing variables and constants, and closed by application of function symbols.
An \emph{atom} is a formula of the form \(p(t_1, \dots, t_k)\) where \(p\) is a predicate symbol and each \(t_i\) is a term.
A \emph{normal logic program} \(P\) is a \emph{finite} set of \emph{rules} of the form \(p \gets p_1, \dots, p_m, \dng{p_{m + 1}}, \dots, \dng{p_{k}}\)
where \(p\) and \(p_i\) are atoms, \(k \geq m \geq 0\), and \(\sim\) denotes default negation.
A fact is a rule with \(k = 0\).
For any rule \(r\) of the above form, \(\head{r} = p\) is called the \emph{head} of \(r\), \(\pbody{r} = \{p_1, \dots, p_m\}\) is called the \emph{positive body} of \(r\), \(\nbody{r} = \{p_{m + 1}, \dots, p_{k}\}\) is called the \emph{negative body} of \(r\), and \(\body{r} = \pbody{r} \cup \nbody{r}\) is called the \emph{body} of \(r\). 
For convenience, we denote \(\bodyf{r} = \bigwedge_{v \in \pbody{r}}v \wedge \bigwedge_{v \in \nbody{r}}\neg v\) as the \emph{body formula} of \(r\); if \(k = 0\), then \(\bodyf{r} = \tval\).
If \(\nbody{r} = \emptyset\) forall \(r \in P\), then \(P\) is called a \emph{positive} \nlp.
If \(\pbody{r} = \emptyset\) forall \(r \in P\), then \(P\) is called a \emph{negative} \nlp.
The program \(P\) is called \emph{internal-variable-free} if for any rule in \(P\), the variables in the body also appear in the head~\cite{Sato1990}.

A term, an atom, or an \nlp is \emph{ground} if it contains no variables.
The \emph{Herbrand base} of an \nlp \(P\) (denoted by \(\hb{P}\)) is the set of ground atoms formed over the alphabet of \(P\).
It is finite in the absence of function symbols, which is the case of \emph{\DLN} programs~\cite{CGT1990}.
The \emph{ground instantiation} of an \nlp \(P\) (denoted by \(\gr{P}\)) is the set of the ground instances of all rules in \(P\).
An \nlp \(P\) is called \emph{uni-rule} if every atom in \(\hb{P}\) appears in the head of at most one rule in \(\gr{P}\)~\cite{SS1997}.

We shall use the \emph{least fixpoint} transformation of \nlps~\cite{DK1989} to prove many new results in the next sections.
To be self-contained, we briefly recall the definition of the \emph{least fixpoint} of an \nlp \(P\) as follows.
The reader can refer~\cite{DK1989} for more details.
Let \(r\) be the rule \(p \leftarrow \dng{p_1}, \dots, \dng{p_k}, q_1, \dots, q_j\) in \(\gr{P}\) and let \(r_i\) be the rules \(q_i \leftarrow \dng{q^1_i}, \dots, \dng{q^{l_i}_i}\) in \(\gr{P}\) where \(1 \leq i \leq j\) and \(l_i \geq 0\).
Then \(\translfp{r}{(\{r_1, \dots, r_j\})}\) is the following rule \[p \leftarrow \dng{p_1}, \dots, \dng{p_k}, \dng{q_1^1}, \dots, \dng{q_1^{l_1}}, \dots, \dng{q_j^1}, \dots, \dng{q_j^{l_j}},\] which means that each atom \(q_j\) in the positive body of \(r\) is substituted with the body of the rule \(r_j\).
Then \(\translfp{P}{}\) is the transformation on negative normal logic programs: \[\translfp{P}{(Q)} = \{\translfp{r}{(\{r_1, \dots, r_j\})} \mid r \in \gr{P}, r_i \in Q, 1 \leq i \leq j\}.\]
Let \(\lfp{P}_i = (\translfp{P}{(\emptyset}))^i = \translfp{P}{(\translfp{P}{(\dots \translfp{P}{(\emptyset)}))}}\), then \(\lfp{P} = \bigcup_{i \geq 1}\lfp{P}_i\) is the least fixpoint of the \nlp \(P\).


\subsubsection{Stable and Supported Partial Models}

A \emph{three-valued interpretation} \(I\) of an \nlp \(P\) is a mapping \(I \colon \hb{P} \to \threed{}\).
If \(I(a) \neq \uval\) for all \(a \in \hb{P}\), then \(I\) is a \emph{two-valued interpretation} of \(P\).
Usually, a two-valued interpretation is written as the set of atoms that are true.
A three-valued interpretation \(I\) characterizes  the set of two-valued interpretations denoted by \(\cset{I}\) as \(\cset{I} = \{J \mid J \in 2^{\hb{P}}, \forall a \in \hb{P}, I(a) \neq \uval \Rightarrow J(a) = I(a)\}\).
For example, if \(I = \{p = \tval, q = \fval, r = \uval\}\), then \(\cset{I} = \{\{p\}, \{p, r\}\}\).
We consider three orders on three-valued interpretations.
The truth order \(\leq_{t}\) is given by \(\fval <_{t} \uval <_{t} \tval\).
Then, \(I_1 \leq_t I_2\) \ifftext \(I_1(a) \leq_{t} I_2(a)\) forall \(a \in \hb{P}\).
The order \(\leq_{s}\) is given by \(\fval <_{s} \uval\), \(\tval <_{s} \uval\); no other relations hold.
Then, \(I_1 \leq_{s} I_2\) \ifftext \(I_1(a) \leq_{s} I_2(a)\) forall \(a \in \hb{P}\). 
In addition, \(I_1 \leq_{s} I_2\) \ifftext \(\cset{I_1} \subseteq \cset{I_2}\), i.e., \(\leq_{s}\) is identical to the subset partial order.
The information order \(\leq_{i}\) is given by \(\uval <_{i} \fval\), \(\uval <_{i} \tval\); no other relations hold.
Then, \(I_1 \leq_{i} I_2\) \ifftext \(I_1(a) \leq_{i} I_2(a)\) forall \(a \in \hb{P}\).
It is easy to see that \(\leq_i\) is the complement of \(\leq_s\), i.e., \(I_1 \leq_s I_2\) \ifftext \(I_2 \leq_i I_1\).

Let \(e\) be a propositional formula on \(\hb{P}\), using only the \(\neg\), \(\land\), and \(\lor\) connectives.
Then the valuation of \(e\) under a three-valued interpretation \(I\) (denoted by \(I(e)\)) following the three-valued logic is defined recursively as follows: \(I(e) = e\) if \(e \in \threed{}\); \(I(e) = I(a)\) if \(e = a, a \in \hb{P}\); \(I(e) = \neg I(e_1)\) if \(e = \neg e_1\); \(I(e) = \text{min}_{\leq_t}(I(e_1), I(e_2))\) if \(e = e_1 \land e_2\); and \(I(e) = \text{max}_{\leq_t}(I(e_1), I(e_2))\) if \(e = e_1 \lor e_2\)
where \(\neg \tval = \fval, \neg \fval = \tval, \neg \uval = \uval\), and \(\text{min}_{\leq_t}\) (resp.\ \(\text{max}_{\leq_t}\)) is the function to get the minimum (resp.\ maximum) value of two values w.r.t.\ the order \(\leq_t\).
We say three-valued interpretation \(I\) is a \emph{three-valued model} of an \nlp \(P\) if for each rule \(r \in \gr{P}\), \(I(\bodyf{r}) \leq_{t} I(\head{r})\).

Given a three-valued interpretation \(I\)  of an \nlp \(P\), the \emph{reduct} of \(P\) w.r.t. \(I\) (denoted by \(P^I\)) is defined as: remove any rule \(a \leftarrow a_1, \dots, a_m, \dng{b_1}, \dots, \dng{b_k} \in \gr{P}\) if \(I(b_i) = \tval\) for some \(1 \leq i \leq k\); afterwards, remove any occurrence of \(\dng{b_i}\) from \(\gr{P}\) such that \(I(b_i) = \fval\); then, replace any occurrence of \(\dng{b_i}\) left by a special atom \textbf{u} such that \(\textbf{u} \not \in \hb{P}\) and it always receives the value \(\uval\).
The \nlp \(P^I\) is positive and has a unique \(\leq_{t}\)-least three-valued model.
See~\cite{Przymusinski1990} for the method for computing this \(\leq_{t}\)-least model.
Then \(I\) is a \emph{stable partial model} of \(P\) if \(I\) is equal to the \(\leq_{t}\)-least three-valued model of \(P^I\).
A stable partial model \(I\) is an \emph{M-stable model} if it is \(\leq_s\)-minimal~\cite{SZ1997}.
In this paper, we define a \emph{regular model} as an M-stable model as they are equivalent~\cite{ELS1997}.
In the two-valued setting, \(P^I\) is identical to the reduct defined by~\cite{GL1988}; \(I\) is a \emph{stable model} of \(P\) if \(I\) is equal to the \(\leq_{t}\)-least two-valued model of \(P^I\).
It is easy to derive that a stable model is a two-valued regular model, as well as a two-valued stable partial model~\cite{AD1995}.
We define \(I^{\textbf{u}}\) as the set of ground atoms that are assigned to \(\uval\) in \(I\), i.e., \(I^{\textbf{u}} = \{v \mid v \in \hb{P}, I(v) = \uval\}\).
Then \(I\) is a \emph{L-stable model} of \(P\) if \(I\) is a stable partial model of \(P\) and there is no stable partial model \(J\) such that \(J^{\textbf{u}} \subset I^{\textbf{u}}\)~\cite{SZ1997}.

The (propositional) \emph{Clark's completion} of an \nlp \(P\) (denoted by \(\comp{P}\)) consists of the following equivalences: \(p \leftrightarrow \bigvee_{r \in \gr{P}, \head{r} = p}\bodyf{r}\), for each \(p \in \hb{P}\); if there is no rule whose head is \(p\), then the equivalence is \(p \leftrightarrow \fval\).
Let \(\rhs{a}\) denote the right-hand side of the equivalence of a ground atom \(a \in \hb{P}\) in \(\comp{P}\).
A three-valued interpretation \(I\) is a three-valued model of \(\comp{P}\) \ifftext for every \(a \in \hb{P}\), \(I(a) = I(\rhs{a})\).
In this work, we define a \emph{supported partial model} of \(P\) as a three-valued model of \(\comp{P}\), and a \emph{supported model} of \(P\) as a two-valued model of \(\comp{P}\).
Clearly, a supported model is a (two-valued) partial supported model.
It has been pointed out that a stable partial model (resp.\ stable model) is a supported partial model (resp.\ supported model), but the reverse may be not true~\cite{DHW2014}.

%

\subsubsection{Stable and Supported Classes}

Let \(P\) be an NLP and let \(I\) be any two-valued interpretation of \(P\).
We have that \(P^I\) is positive, and has a unique \(\leq_{t}\)-least two-valued model (say \(J\)).
We define the operator \(F_P\) as \(\Fop{I} = J\).
In contrast, we define the operator \(T_P\) as \(\Top{I} = J\) where \(J\) is a two-valued interpretation such that for every \(a \in \hb{P}\), \(J(a) = I(\rhs{a})\).

\begin{definition}\label{def:NLP-StC-SuC}
	A non-empty set \(S\) of two-valued interpretations is a \emph{stable class} of an \nlp \(P\) \ifftext it holds that \(S = \{\Fop{I} \mid I \in S\}\)~\cite{BS1992}.
	Similarly, \(S\) is a \emph{supported class} of an \nlp \(P\) \ifftext it holds that \(S = \{\Top{I} \mid I \in S\}\)~\cite{IS2012}.
	A stable (resp.\ supported) class \(S\) of \(P\) is \emph{strict} \ifftext no proper subset of \(S\) is a stable (resp.\ supported) class of \(P\).
\end{definition}

\begin{example}\label{exam:Datalog-St-Su-class}
	Consider the \nlp \(P = \{a \leftarrow b; b \leftarrow a\}\) where the symbol ``;'' is used to separate program rules.
	Since \(P\) is positive, for any two-valued interpretation \(I\), \(P^I = P\).
	It follows that \(\Fop{I} = \emptyset\) for all \(I\).
	It is easy to derive that \(P\) has only a unique stable class \(\{\emptyset\}\).
	On the other hand, we have that \(\Top{\emptyset} = \emptyset\), \(\Top{\{a\}} = \{b\}\), \(\Top{\{b\}} = \{a\}\), and \(\Top{\{a, b\}} = \{a, b\}\).
	It is easy to derive that \(P\) has three supported classes: \(\{\emptyset\}\), \(\{\{a\}, \{b\}\}\), and \(\{\{a, b\}\}\).
\end{example}


\subsubsection{Transition Graphs}

The \emph{stable} (resp.\ \emph{supported}) \emph{transition graph} of \(P\) is a (possibly infinite) directed graph (denoted by \(\tgst{P}\) (resp.\ \(\tgsp{P}\))) on the set of all possible two-valued interpretations of \(P\) such that \((I, J)\) is an arc of \(\tgst{P}\) (resp.\ \(\tgsp{P}\)) \ifftext \(J = \Fop{I}\) (resp.\ \(J = \Top{I}\)).
Given a directed graph \(G\), we use \(V(G)\) (resp.\ \(E(G)\)) to denote the set of vertices (resp.\ arcs) of \(G\).
For more details about theory of finite and infinite graphs, we refer the reader to~\cite{Konig1990}.

\begin{example}\label{exam:NLP-all}
	Consider the \nlp \(P = \{p(X) \leftarrow p(s(X))\}\).
	The Clark's completion of \(P\) includes \(p(0) \leftrightarrow p(s(0))\), \(p(s(0)) \leftrightarrow p(s(s(0)))\), \dots
	It is easy to verify that \(P\) has only three supported partial models: \(I_1\) where \(I_1(a) = \fval\) for every \(a \in \hb{P}\), \(I_2\) where \(I_2(a) = \tval\) for every \(a \in \hb{P}\), and \(I_3\) where \(I_3(a) = \uval\) for every \(a \in \hb{P}\).
	We have \(P^{I_1} = P^{I_2} = P^{I_3} = P\).
	It is easy to derive that \(I_1\) is the unique stable partial model of \(P\).
	It is also the unique regular model, stable model, and L-stable model of \(P\).
	The stable and supported transition graphs of \(P\) are shown in Figure~\ref{fig:exam-NLP-sttg-sptg}~(a) and Figure~\ref{fig:exam-NLP-sttg-sptg}~(b), respectively.
	The \nlp \(P\) has only one stable class \(\{I_1\}\), but three supported classes \(\{I_1\}\),  \(\{I_2\}\), and  \(\{I_1, I_2\}\).
	Then \(\{I_1\}\) is the unique strict stable class of \(P\), whereas \(\{I_1\}\) and  \(\{I_2\}\) are strict supported classes of \(P\).
\end{example}

\begin{figure}[!ht]
	\centering
	\begin{subfigure}{0.5\textwidth}
		\begin{tikzpicture}[
			node distance=0.6cm and 0.6cm,
			every node/.style={draw, ellipse, minimum width=0.5cm, minimum height=0.5cm, font=\small},
			arrow/.style={-{Latex[length=2mm]}, thick},
			dot/.style={circle, fill, inner sep=1pt},
			ellipsis/.style={draw=none, font=\large\bfseries}
			]
			
			\node (I1) at (0,0) {$I_1$};
			\node (a0) [above=of I1, xshift=0cm] {$\{a_0\}$};
			\node (a1) [right=of a0] {$\{a_1\}$};
			\node (a2) [right=of a1] {$\{a_2\}$};
			\node (dots1) [right=of a2, draw=none] {$\cdots$};
			\node (I2) [right=of I1, xshift=3cm] {$I_2$};
			
			\node (a0a1) [above=of a1] {$\{a0, a1\}$};
			\node (a1a2) [right=of a0a1] {$\{a1, a2\}$};
			\node (dots2) [right=of a1a2, draw=none] {$\cdots$};
			
			\node (dots3) [above=of a0a1, draw=none] {$\cdots$};
			\node (dots4) [above=of a1a2, draw=none] {$\cdots$};
			
			\draw[arrow] (a0a1) -- (I1);
			\draw[arrow] (a1a2) edge [bend left=10] (I1);
			\draw[arrow] (a2) -- (I1);
			\draw[arrow] (a1) -- (I1);
			\draw[arrow] (a0) -- (I1);
			
			\path [arrow] (I1) edge [out=180, in=135, loop] (I1);
			\path [arrow] (I2) edge [] (I1);
		\end{tikzpicture}
		\caption{}
	\end{subfigure}\begin{subfigure}{0.5\textwidth}
		\begin{tikzpicture}[
			node distance=0.6cm and 0.6cm,
			every node/.style={draw, ellipse, minimum width=0.5cm, minimum height=0.5cm, font=\small},
			arrow/.style={-{Latex[length=2mm]}, thick},
			dot/.style={circle, fill, inner sep=1pt},
			ellipsis/.style={draw=none, font=\large\bfseries}
			]
			
			\node (I1) at (0,0) {$I_1$};
			\node (a0) [above=of I1, xshift=0cm] {$\{a_0\}$};
			\node (a1) [right=of a0] {$\{a_1\}$};
			\node (a2) [right=of a1] {$\{a_2\}$};
			\node (dots1) [right=of a2, draw=none] {$\cdots$};
			\node (I2) [right=of I1, xshift=3cm] {$I_2$};
			
			\node (a0a1) [above=of a1] {$\{a0, a1\}$};
			\node (a1a2) [right=of a0a1] {$\{a1, a2\}$};
			\node (dots2) [right=of a1a2, draw=none] {$\cdots$};
			
			\node (dots3) [above=of a0a1, draw=none] {$\cdots$};
			\node (dots4) [above=of a1a2, draw=none] {$\cdots$};
			
			\draw[arrow] (a0a1) -- (a0);
			\draw[arrow] (a1a2) -- (a0a1);
			\draw[arrow] (a2) -- (a1);
			\draw[arrow] (a1) -- (a0);
			\draw[arrow] (a0) -- (I1);
			
			\draw[dashed, thick, arrow] (dots2) -- (a1a2);
			\draw[dashed, thick, arrow] (dots1) -- (a2);
			
			\path [arrow] (I1) edge [out=0, in=45, loop] (I1);
			\path [arrow] (I2) edge [out=0, in=45, loop] (I2);
		\end{tikzpicture}
		\caption{}
	\end{subfigure}
\caption{Stable transition graph (a) and supported transition graph (b) of the NLP \(P\) of Example~\ref{exam:NLP-all}. Herein, \(a_i\) denotes \(p(s^i(0))\), \(I_1(a) = \fval, \forall a \in \hb{P}\) (\(I_1 = \emptyset\)), \(I_2(a) = \tval, \forall a \in \hb{P}\) (\(I_2 = \hb{P}\)).}
\label{fig:exam-NLP-sttg-sptg}
\end{figure}
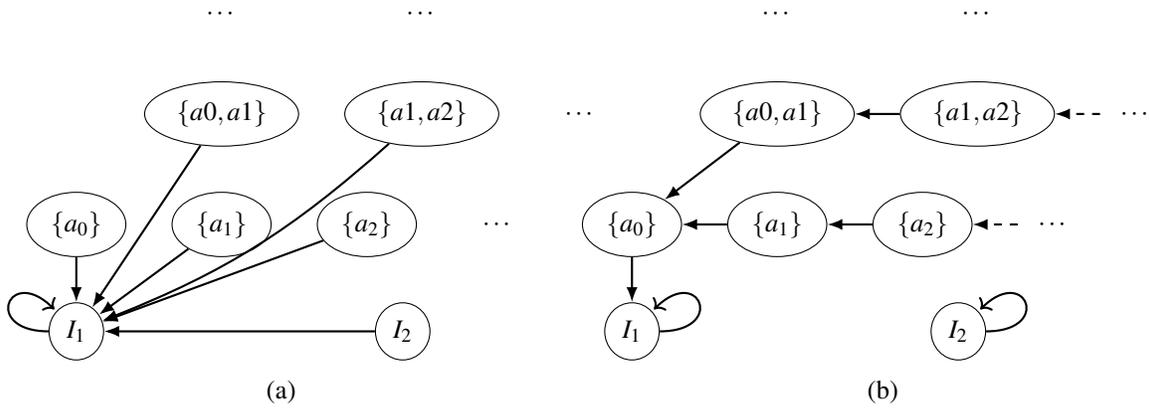

\subsection{Set Theory}\label{subsec:preliminaries-set-theory}

Our proofs in the following sections rely on existing results in set theory.
To keep the paper self-contained, we recall them below.
For more details, we refer the reader to~\cite{Halmos1960}.

\begin{lemma}[Zorn's lemma~\cite{Zorn1935}]\label{lem:Zorn-lemma}
	Given a partially ordered set \((P, \leq)\).
	If any \(\leq\)-chain possesses an upper bound, then \((P, \leq)\) has a maximal element.
\end{lemma}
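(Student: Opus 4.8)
The plan is to derive the lemma from the Axiom of Choice, to which it is in fact equivalent; I would present the classical \emph{tower} (conforming-set) argument, which avoids any explicit appeal to ordinals and stays self-contained in the style of the cited reference~\cite{Halmos1960}.

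First I would argue by contradiction: suppose $(P,\leq)$ has no maximal element. Since the empty set is a $\leq$-chain, it has an upper bound, so $P \neq \emptyset$. Now let $C \subseteq P$ be an arbitrary chain with upper bound $u$; because $u$ is not maximal there is some $v$ with $u < v$, and then $v$ is a \emph{strict} upper bound of $C$. Hence the set of strict upper bounds of every chain is non-empty, and by the Axiom of Choice I would fix a function $g$ assigning to each chain $C$ one of its strict upper bounds $g(C)$, so that $g(C) > c$ for every $c \in C$.

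Next I would single out the subsets of $P$ that are ``built by $g$'': call $A \subseteq P$ \emph{conforming} if $A$ is well-ordered by $\leq$ and every $a \in A$ equals $g(\{x \in A : x < a\})$, i.e.\ each element is exactly the strict upper bound chosen for the initial segment strictly below it. The crux is the \emph{comparability lemma}: of any two conforming sets, one is an initial segment of the other. Granting this, the union $U$ of all conforming sets is again conforming and is the $\subseteq$-largest such set. But $U$ is a chain, so $g(U)$ is defined and strictly exceeds every element of $U$; then $U \cup \{g(U)\}$ is conforming and strictly larger than $U$, forcing $g(U) \in U$ and hence $g(U) > g(U)$, a contradiction. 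Therefore a maximal element must exist.

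I expect the comparability lemma to be the main obstacle: proving that two conforming sets are initial segments of one another requires a careful transfinite-induction-flavoured argument comparing their elements level by level and showing that no ``first point of divergence'' can occur without violating the defining equation $a = g(\{x < a\})$. The remaining verifications---that the union of conforming sets inherits both the well-ordering and the defining equation, and that adjoining $g(U)$ preserves conformity---are routine once comparability is in hand. An alternative route builds a strictly increasing transfinite sequence $x_\alpha = g(\{x_\beta : \beta < \alpha\})$ by transfinite recursion and derives the contradiction from the fact that the proper class of ordinals cannot inject into the set $P$; this is shorter but shifts the technical weight onto the recursion theorem and Hartogs' theorem.
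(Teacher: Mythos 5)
The paper does not prove this lemma at all: Zorn's lemma is recalled in Section 2.2 purely as a classical set-theoretic fact, cited to Zorn (1935) with Halmos (1960) given as a general reference, and it is then used as a black box in the proofs of Lemmas 3.1, 3.2, 4.1 and 4.2. So there is no in-paper argument to compare yours against. That said, your proposal is the standard and correct ``tower''/conforming-set derivation of Zorn's lemma from the Axiom of Choice --- essentially the proof in Halmos, the very reference the paper points to --- and your identification of the comparability lemma as the real technical burden is accurate. The one caveat is that you leave that comparability lemma (no first point of divergence between two conforming sets) as a sketch; in a fully written-out proof this is where all the work lies, and the rest (the union of conforming sets is conforming, adjoining $g(U)$ preserves conformity, the final contradiction $g(U) > g(U)$) is routine as you say. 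Your alternative route via a transfinite recursion $x_\alpha = g(\{x_\beta : \beta < \alpha\})$ plus Hartogs' theorem is also valid and would in fact harmonize nicely with this paper, which already invokes Hartogs' lemma (Lemma 2.3) for exactly this kind of ``an increasing ordinal-indexed sequence into a set must stabilize'' argument in Theorem 3.4.
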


There is a ``strengthened'' version that is equivalent to Zorn's lemma:

\begin{lemma}[\cite{BS2014}]\label{lem:Zorn-lemma-strengthened}
	Given a partially ordered set \((P, \leq)\), if any \(\leq\)-chain has an upper bound, then for any \(p \in P\), \(P\) has a maximal element \(m\) such that \(p \leq m\).
\end{lemma}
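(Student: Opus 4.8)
The plan is to deduce this strengthened version directly from the ordinary Zorn's lemma (Lemma~\ref{lem:Zorn-lemma}) by applying the latter to a suitable sub-poset. Fix \(p \in P\) and consider the principal up-set
\[
P_{\geq p} = \{q \in P \mid p \leq q\},
\]
equipped with the order inherited from \((P, \leq)\). My aim is to verify that \((P_{\geq p}, \leq)\) satisfies the chain hypothesis of Lemma~\ref{lem:Zorn-lemma}, obtain a maximal element \(m\) of \(P_{\geq p}\), and then argue that \(m\) is in fact maximal in the whole of \(P\).

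First I would check the chain condition for \(P_{\geq p}\). Let \(C\) be a \(\leq\)-chain in \(P_{\geq p}\). Since the order is inherited, \(C\) is also a \(\leq\)-chain in \(P\), so by hypothesis it has an upper bound \(u \in P\). It remains to ensure we can take an upper bound lying inside \(P_{\geq p}\). If \(C \neq \emptyset\), choose any \(c \in C\); then \(p \leq c \leq u\), so \(p \leq u\) by transitivity and hence \(u \in P_{\geq p}\). If \(C = \emptyset\), then \(p\) itself serves as an upper bound in \(P_{\geq p}\), using reflexivity \(p \leq p\) to guarantee \(p \in P_{\geq p}\) so that \(P_{\geq p}\) is nonempty. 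In either case every chain of \(P_{\geq p}\) has an upper bound in \(P_{\geq p}\).

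Applying Lemma~\ref{lem:Zorn-lemma} to \((P_{\geq p}, \leq)\) then yields a maximal element \(m\) of \(P_{\geq p}\), and \(m \in P_{\geq p}\) gives \(p \leq m\) as required. The last step is to promote maximality in \(P_{\geq p}\) to maximality in \(P\): suppose \(q \in P\) with \(m \leq q\). Then \(p \leq m \leq q\) forces \(p \leq q\), so \(q \in P_{\geq p}\); maximality of \(m\) within \(P_{\geq p}\) gives \(q = m\). Thus \(m\) is maximal in \(P\).

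The argument is essentially routine once the sub-poset is chosen correctly; the only points that require care—and where an imprecise proof could slip—are the empty-chain case (which is exactly where reflexivity is needed to keep \(P_{\geq p}\) nonempty) and the final transfer of maximality from \(P_{\geq p}\) back to \(P\), which relies on transitivity to confine any competitor \(q\) to the up-set.
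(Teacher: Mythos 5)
Your argument is correct. Note that the paper itself gives no proof of this lemma---it is stated as a known result imported from the cited reference---so there is nothing internal to compare against; your derivation from Lemma~\ref{lem:Zorn-lemma} via the principal up-set \(P_{\geq p} = \{q \in P \mid p \leq q\}\) is the standard one and is carried out soundly. The two points you flag as delicate are handled properly: the empty chain is covered because \(p \in P_{\geq p}\) by reflexivity (which also matters because the paper's formulation of Lemma~\ref{lem:Zorn-lemma} quantifies over \emph{any} chain, including the empty one, so the hypothesis already forces \(P\) nonempty and your sub-poset must likewise be shown nonempty), and the transfer of maximality back to \(P\) correctly uses transitivity to show that any \(q\) with \(m \leq q\) already lies in \(P_{\geq p}\).
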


Finally, we show Hartogs' lemma that basically means that for any set \(X\), there is always a bigger ordinal~\cite{Hartogs1915}.

\begin{lemma}[Hartogs' lemma]\label{lem:Hartogs}
	For any set \(X\), there is a least ordinal \(\alpha\) such that there is no injection from \(\alpha\) into \(X\).
\end{lemma}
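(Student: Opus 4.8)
The plan is to construct the \emph{Hartogs number} of $X$ as the set of all ordinals that inject into $X$, and then show that this set is itself an ordinal with the required minimality property. The argument proceeds entirely within ZF, using no form of choice---which matters here, since the surrounding development invokes choice-equivalent principles (Zorn's lemma) only where they are genuinely needed.

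First I would form the set $\mathcal{W}$ of all pairs $(A, R)$ where $A \subseteq X$ and $R \subseteq A \times A$ is a well-ordering of $A$. Since $\mathcal{W} \subseteq \mathcal{P}(X) \times \mathcal{P}(X \times X)$, it is a set by the power set and separation axioms. Next, every well-ordering $(A, R)$ is order-isomorphic to a unique ordinal, its \emph{order type}; mapping each element of $\mathcal{W}$ to its order type and applying the replacement axiom yields a set $H$ of ordinals, namely the set of order types realized by well-orderable subsets of $X$. The point of this construction is that, by definition, an ordinal $\beta$ belongs to $H$ \ifftext $\beta$ injects into $X$: an injection $\beta \hookrightarrow X$ transports the canonical well-ordering of $\beta$ onto a well-ordering of a subset of $X$ of order type $\beta$, and conversely the canonical isomorphism witnessing $\beta \in H$ is such an injection.

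The crux is to show that $H$ is an ordinal, i.e.\ a transitive set of ordinals. Transitivity is the key step: if $\beta \in H$, say $\beta$ is the order type of $(A, R) \in \mathcal{W}$, and $\gamma < \beta$, then $\gamma$ is the order type of a proper initial segment of $(A, R)$, which is itself a well-ordering of a subset of $A \subseteq X$; hence $\gamma \in H$. A transitive set of ordinals is an ordinal, so $H$ is an ordinal.

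Finally I would verify that $H$ is the least ordinal that does not inject into $X$. Since $H$ is an ordinal, $H \notin H$ by the foundation axiom; were there an injection $H \hookrightarrow X$, the equivalence established above would force $H \in H$, a contradiction. Hence $H$ does not inject into $X$. Minimality is then immediate: every ordinal $\beta < H$ satisfies $\beta \in H$, so it does inject into $X$. I expect the main obstacle to be the careful use of replacement to guarantee that the order types form a \emph{set} rather than a proper class---this is exactly what sidesteps the Burali--Forti paradox---together with the transitivity argument; once $H$ is known to be an ordinal, the remaining steps are routine consequences of foundation.
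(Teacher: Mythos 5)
The paper does not prove this lemma at all: it is recalled in the Preliminaries as a classical background result and attributed to Hartogs (1915), so there is no in-paper argument to compare yours against. Your proof is the standard ZF construction of the Hartogs number and is correct and complete: forming the set of well-orderings of subsets of $X$ via power set and separation, passing to the set $H$ of their order types via replacement (which, as you note, is exactly what keeps Burali--Forti at bay), establishing the equivalence ``$\beta \in H$ \ifftext $\beta$ injects into $X$,'' proving transitivity of $H$ so that it is an ordinal, and concluding that $H$ itself cannot inject into $X$ while every smaller ordinal does. The only cosmetic remark is that $H \notin H$ does not actually need the foundation axiom---irreflexivity of the $\in$-ordering on ordinals already gives it---but invoking foundation is harmless and does not weaken the choice-free character of the argument, which is indeed the relevant point given that the paper is otherwise careful about where Zorn's lemma is genuinely required.
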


\section{Trap Space Semantics}\label{sec:TS-defs-properties}

The concept of stable (resp.\ supported) trap set is obtained by relaxing the equality in Definition~\ref{def:NLP-StC-SuC} of stable (resp.\ supported) class, as follows:
\begin{definition}\label{def:NLP-trap-set}
	A non-empty set \(S\) of two-valued interpretations of an \nlp \(P\) is called a \emph{stable trap set} (resp.\ \emph{supported trap set}) of \(P\) \ifftext \(\{\Fop{I} \mid I \in S\} \subseteq S\) (resp.\ \(\{\Top{I} \mid I \in S\} \subseteq S\)).
\end{definition}

Note that a stable (resp.\ supported) class is clearly a stable (resp.\ supported) trap set, but the reverse may not be true.
Indeed, given the \nlp \(P\) of Example~\ref{exam:NLP-all}, \(\{\emptyset\}\) is a stable class and also a stable trap set, whereas \(\{\emptyset, a_0\}\) is a stable trap set but not a stable class.
Similarly, \(\{\emptyset, \hb{P}\}\) is a supported class and also a supported trap set, whereas \(\{\emptyset, a_0, a_1\}\) is a supported trap set but not a supported class.
By adding the condition on three-valued interpretations, we obtain the concept of stable or supported trap space:

\begin{definition}\label{def:NLP-trap-space}
	A three-valued interpretation \(I\) of an \nlp \(P\) is called a \emph{stable trap space} (resp.\ \emph{supported trap space}) of \(P\) \ifftext \(\cset{I}\) is a stable (resp.\ supported) trap set of \(P\).
	We call \(I\) a \emph{trap space} of \(P\) \ifftext it is a stable or supported trap space of \(P\).
\end{definition}

More broadly, it is easy to adapt the concept of stable or supported trap set of an \nlp to a (possibly infinite) directed graph in general:

\begin{definition}\label{def:DiG-trap-set}
	Consider a directed graph \(G\).
	A non-empty subset \(S\) of \(V(G)\) is called a \emph{trap set} of \(G\) \ifftext there are no two vertices \(A\) and \(B\) such that \(A \in S\), \(B \not \in S\), and \((A, B) \in E(G)\).
\end{definition}

By definition, this immediately follows that \(S\) is a stable (resp.\ supported) trap set of \(P\) \ifftext \(S\) is a trap set of \(\tgst{P}\) (resp.\ \(\tgsp{P}\)).
Hence, we can deduce from Definition~\ref{def:NLP-trap-space} that a three-valued interpretation \(I\) is a stable (resp.\ supported) trap space of \(P\) \ifftext \(\cset{I}\) is a trap set of \(\tgst{P}\) (resp.\ \(\tgsp{P}\)).
Since stable and supported transition graphs represent the dynamical aspect of an \nlp~\cite{BS1992,IS2012}, this indicates that the trap space semantics for \nlps admits a dynamical characterization.

\begin{example}\label{exam:NLP-TS}
	Consider the \nlp \(P\) of Example~\ref{exam:NLP-all}.
	The stable and supported transition graphs of \(P\) are shown in Figure~\ref{fig:exam-NLP-sttg-sptg}~(a) and Figure~\ref{fig:exam-NLP-sttg-sptg}~(b), respectively.
	We have that the three-valued interpretation \(I_1\) 
	is a stable (resp.\ supported) trap space of \(P\) because \(\cset{I_1} = \{\emptyset\}\) is a stable (resp.\ supported) trap set of \(P\).
	It is easy to verify that \(\cset{I_1}\) is also a trap set of \(\tgst{P}\) (resp.\ \(\tgsp{P}\)).
	Note that \(\{\emptyset, a_0, a_1\}\) is a stable (resp.\ supported) trap set, but it does not correspond to a stable (resp.\ supported) trap space of \(P\).
	We see that \(I_2\) is a supported trap space but not a stable trap space of \(P\), since \(\cset{I_2} = \{\hb{P}\}\) is a supported trap set but not a stable trap set of \(P\).
\end{example}


We define the \emph{trap space semantics} of an \nlp as the set of its trap spaces.
When applicable we shall restrict ourselves to stable or
supported trap spaces to relate those subsets to other semantics.
For convenience, let \(\stts{P}\) (resp.\ \(\suts{P}\)) denote the set of stable (resp.\ supported) trap spaces of an \nlp \(P\).
We first establish the simplest property, which is, however, vital for proving subsequent results, especially in infinite structures.
Note that, due to space limitations, all proofs are given in Appendix~\ref{sec:appendix-detailed-proofs}.

\begin{propositionE}[][end, restate, text link=]\label{prop:NLP-exist-St-Su-TS}
	An \nlp \(P\) always has a stable or supported trap space.
\end{propositionE}
\begin{proofE}[text proof=]
	Let \(I^*\) be a three-valued interpretation that corresponds to all the two-valued interpretations, i.e., \(\forall a \in \hb{P}, I^*(a) = \star\).
	By setting \(S = \cset{I^*}\), the condition \(\{\Fop{I} \mid I \in S\} \subseteq S\) (resp.\ \(\{\Top{I} \mid I \in S\} \subseteq S\)) always holds.
	Hence, \(\cset{I^*}\) is a stable (resp.\ supported) trap set of \(P\).
	By definition, \(I^*\) is a stable (resp.\ supported) trap space of \(P\).
\end{proofE}

Next, we investigate the consistency among stable or supported trap spaces, which shall be used to investigate the model or class disjointness (see Section~\ref{sec:TS-relationships}).

\begin{definition}\label{def:NLP-interpretation-consistent}
	Consider an \nlp \(P\).
	Two three-valued interpretations \(I_1\) and \(I_2\) of \(P\) are called \emph{consistent} \ifftext for all \(a \in \hb{P}\), \(I_1(a) \leq_s I_2(a)\) or \(I_2(a) \leq_s I_1(a)\).
	Equivalently, \(I_1\) and \(I_2\) are called consistent \ifftext \(\cset{I_1} \cap \cset{I_2} \neq \emptyset\).
\end{definition}

Note that using \(I_1 \leq_s I_2\) or \(I_2 \leq_s I_1\) in Definition~\ref{def:NLP-interpretation-consistent} is insufficient here, since two three-valued interpretations that are not comparable w.r.t. \(\leq_s\) can still be consistent.
We then generalize the above definition to a non-empty set of mutually consistent trap spaces:

\begin{definition}\label{def:NLP-intersection-interpretation-consistent-set}
	Given an \nlp \(P\), let \(E\) be a non-empty set of mutually consistent three-valued interpretations (i.e., \(E_i\) and \(E_j\) are consistent for all \(E_i, E_j \in E\)).
	Then the \emph{intersection} of all elements in \(E\) (denoted by \(\sqcap_E\)) is a three-valued interpretation defined as: \(\sqcap_E(a) = \text{min}_{\leq_s}\{E_i(a) \mid E_i \in E\}\) for all atom \(a \in \hb{P}\).
	This is well-defined because \(\{E_i(a) \mid E_i \in E\}\) is a totally ordered finite set w.r.t. \(\leq_s\) (thus its minimum element exists), since all elements in \(E\) are mutually consistent.
\end{definition}

By pointing out that the intersection of mutually consistent stable (resp.\ supported) trap spaces is a stable (resp.\ supported) trap space, along with applying Definition~\ref{def:NLP-interpretation-consistent} and Lemma~\ref{lem:Zorn-lemma-strengthened}, we obtain the following two lemmas:

\begin{lemmaE}[][end, restate, text link=]\label{lem:NLP-StTS-min-s-StTS}
	Consider an \nlp \(P\) and a three-valued interpretation \(J\) of \(P\).
	If \(J\) is a stable trap space of \(P\), then there exists a \(\leq_s\)-minimal stable trap space \(J'\) of \(P\) such that \(J' \leq_s J\).
\end{lemmaE}
\begin{proofE}[text proof=]
	Clearly, \((\stts{P},  \leq_i)\) is a partially ordered set.
	Consider a \(\leq_i\)-chain \((E_k)_{k \in I}\).
	Note that \(I\) is the set of all elements of the \(\leq_i\)-chain, and it can be written as an indexed family, which can be done via using the set itself as index set.
	Consider now the intersection of all elements in \(I\), said \(\sqcap_I\) whose existence is guaranteed by Definition~\ref{def:NLP-intersection-interpretation-consistent-set}.
	Following from the definition, we have \(\cset{\sqcap_I} = \bigcap_{k \in I}\cset{E_k}\).
	It follows that \(\sqcap_I \leq_s E_k\) for all \(E_k, k \in I\).
	Equivalently, \(E_k \leq_i \sqcap_I\) for all \(E_k, k \in I\), implying that \(\sqcap_I\) is an upper bound w.r.t. \(\leq_i\) of \((E_k)_{k \in I}\).
	It remains to show that \(\sqcap_I\) is a stable trap space of \(P\).
	
	Consider a two-valued interpretation \(A\) of \(P\).
	Let \(B\) be the successor of \(A\) in \(\tgst{P}\), i.e., \(B = \Fop{A}\).
	Assume that \(A \in \cset{\sqcap_I}\).
	Then for all \(k \in I\), we have that \(A \in \cset{E_k}\).
	Thus \(B \in \cset{E_k}\), since \(E_k\) is a stable trap space.
	Hence, \(B \in \cset{\sqcap_I}\).
	This implies that if \(A \in \cset{\sqcap_I}\) then \(\Fop{A} \in \cset{\sqcap_I}\), thus \(\sqcap_I\) is a stable trap set of \(P\).
	Since \(\sqcap_I\) is a three-valued interpretation by definition, it is a stable trap space of \(P\).
	
	By definition, a \(\leq_s\)-minimal stable trap space is equivalent to a \(\leq_i\)-maximal stable trap space.
	By Lemma~\ref{lem:Zorn-lemma-strengthened}, if \(J\) is a stable trap space of \(P\), then there exists a \(\leq_s\)-minimal stable trap space \(J'\) of \(P\) such that \(J \leq_i J'\) (equivalently, \(J' \leq_s J\)).
\end{proofE}

\begin{lemmaE}[][end, restate, text link=]\label{lem:NLP-SuTS-min-s-SuTS}
	Consider an \nlp \(P\) and a three-valued interpretation \(J\) of \(P\).
	If \(J\) is a supported trap space of \(P\), then there exists a \(\leq_s\)-minimal supported trap space \(J'\) of \(P\) such that \(J' \leq_s J\).
\end{lemmaE}
\begin{proofE}[text proof=]
	The proof of Lemma~\ref{lem:NLP-SuTS-min-s-SuTS} can be applied here by substituting \(\Fop{A}\) in Lemma~\ref{lem:NLP-StTS-min-s-StTS} with \(\Top{A}\).
\end{proofE}

In addition to Proposition~\ref{prop:NLP-exist-St-Su-TS}, we obtain the following main result:

\begin{theoremE}[\textbf{main result}][end, restate, text link=]\label{theo:NLP-exist-min-s-StTS-SuTS}
	Any \nlp \(P\) has a \(\leq_s\)-minimal stable (or supported) trap space.
\end{theoremE}
\begin{proofE}[text proof=]
	For the case of \(\leq_s\)-minimal stable trap spaces, this immediately follows from Lemma~\ref{lem:NLP-StTS-min-s-StTS} and the fact that \(P\) has at least one stable trap space by Proposition~\ref{prop:NLP-exist-St-Su-TS}.
	
	For the case of \(\leq_s\)-minimal supported trap spaces, this immediately follows from Lemma~\ref{lem:NLP-SuTS-min-s-SuTS} and the fact that \(P\) has at least one supported trap space by Proposition~\ref{prop:NLP-exist-St-Su-TS}.
\end{proofE}


We then establish the relationship between a non-empty set of two-valued interpretations and stable or supported trap spaces, which is useful for bridging the relationships between stable (resp.\ supported) classes and stable (resp.\ supported) trap spaces later.

\begin{propositionE}[][end, restate, text link=]\label{prop:NLP-unique-minimal-covered-TS}
	Consider an NLP \(P\).
	Let \(S\) be a non-empty set of two-valued interpretations of \(P\).
	Then there is a unique \(\leq_s\)-minimal stable (resp.\ supported) trap space, denoted by \(\langle S\rangle^{st}_{P}\) (resp.\ \(\langle S\rangle^{sp}_{P}\)), covering \(S\), i.e., \(S \subseteq \cset{\langle S\rangle^{st}_{P}}\) (resp.\ \(S \subseteq \cset{\langle S\rangle^{sp}_{P}}\)).
\end{propositionE}
\begin{proofE}[text proof=]
	Hereafter, we prove the case of stable trap spaces. 
	The proof for the case of supported trap spaces is analogous.
	Let \(T\) be the set of all stable trap spaces \(I\) such that \(S \subseteq \cset{I}\).
	The set \(T\) is non-empty as it contains \(\epsilon\) that is a stable trap space in which all atoms are assigned to \(\uval\) (see Proposition~\ref{prop:NLP-exist-St-Su-TS}).
	Since \(S\) is non-empty, the elements in \(T\) are mutually consistent, thus we can take the intersection of all these elements said \(\sqcap_T\).
	
	Next, we show that \(\sqcap_T\) is also a stable trap space of \(P\).
	Consider a two-valued interpretation \(A\) of \(P\).
	Let \(B\) be the successor of \(A\) in \(\tgst{P}\), i.e., \(B = \Fop{A}\).
	Assume that \(A \in \cset{\sqcap_T}\).
	Then for all \(E_k \in T\), we have that \(A \in \cset{E_k}\).
	Thus \(B \in \cset{E_k}\), since \(E_k\) is a stable trap space.
	Hence, \(B \in \cset{\sqcap_T}\).
	This implies that if \(A \in \cset{\sqcap_T}\) then \(\Fop{A} \in \cset{\sqcap_T}\), thus \(\sqcap_T\) is a stable trap set of \(P\).
	Since \(\sqcap_T\) is a three-valued interpretation by definition, it is a stable trap space of \(P\).
	
	By definition, \(\sqcap_T\) is unique and \(\leq_s\)-minimal.
	Since \(S \subseteq \cset{E_k}\) for all \(E_k \in T\), \(S \subseteq \cset{\sqcap_T}\).
	By setting \(\langle S\rangle^{st}_{P} = \sqcap_T\), we can conclude the proof.
\end{proofE}

Finally, we show the separation of trap spaces \wrttext the \(\leq_s\)-minimality:

\begin{propositionE}[][end, restate, text link=]\label{prop:NLP-two-min-ts-separation}
	Consider an \nlp \(P\).
	Let \(I_1\) and \(I_2\) be two distinct \(\leq_s\)-minimal stable (resp.\ supported) trap spaces of \(P\).
	Then \(I_1\) and \(I_2\) are not consistent.
\end{propositionE}
\begin{proofE}[text proof=]
	Hereafter, we prove the case of stable trap spaces. 
	The proof for the case of supported trap spaces is analogous.
	Assume that \(I_1\) and \(I_2\) are consistent.
	Then \(I = \sqcap_{\{I_1, I_2\}}\) exists and \(\cset{I} = \cset{I_1} \cap \cset{I_2}\).
	By similarly applying the arguments of Proposition~\ref{prop:NLP-unique-minimal-covered-TS}, we have \(I\) is also a stable trap space of \(P\).
	Since \(I_1\) and \(I_2\) are distinct, we have \(\cset{I_1} \subset \cset{I}\) or \(\cset{I_2} \subset \cset{I}\) must hold.
	This leads to a contradiction because \(I_1\) and \(I_2\) are \(\leq_s\)-minimal stable trap spaces of \(P\).
	Hence, \(I_1\) and \(I_2\) are not consistent.
\end{proofE}

\section{Relationships with Other Semantics}\label{sec:TS-relationships}

Naturally, if a stable (resp.\ supported) trap space of an \nlp \(P\) is two-valued, then it is also a stable (resp.\ supported) model of \(P\).
We now present additional relationships between stable or supported trap spaces and other types of models.
These results show that the trap space semantics can provide a \emph{unified} semantic framework for \nlps.
Note that there are historically different model-theoretic or dynamical semantics for \nlps, and they are often defined in different ways~\cite{YY1995,SZ1997}.
For convenience, let \(\stpms{P}\) (resp.\ \(\supms{P}\)) denote the set of stable (resp.\ supported) partial models of \(P\); let \(\stms{P}\) (resp.\ \(\sums{P}\)) denote the set of stable (resp.\ supported) models of \(P\); and let \(\regms{P}\) denote the set of regular models of \(P\).

\subsection{Relationships with the Stable and Supported Class Semantics}\label{subsec:TS-relationships-StC-SuC}

By leveraging the concept of stable or supported trap set underlying the trap space semantics, we correct a flaw in the existing alternative characterization of strict stable or supported classes proved in~\cite{IS2012}.
Due to space limitations, we refer the reader to Section~\ref{sec:char-st-su-class} for the technical details.

We now prove a similar version of Lemma~\ref{lem:NLP-StTS-min-s-StTS}, but for stable trap sets instead of stable trap spaces.
The proof technique is analogous to that of Lemma~\ref{lem:NLP-StTS-min-s-StTS}.

\begin{lemmaE}[][end, restate, text link=]\label{lem:NLP-St-trap-set-min-s}
	Consider an \nlp \(P\) and a non-empty set \(S\) of two-valued interpretations of \(P\).
	If \(S\) is a stable trap set of \(P\), then there exists a \(\subseteq\)-minimal stable trap set \(S'\) of \(P\) such that \(S' \subseteq S\).
\end{lemmaE}
\begin{proofE}[text proof=]
	Let \(TS(P)\) denote the set of all stable trap sets of \(P\).
	Clearly, \((TS(P),  \supseteq)\) is a partially ordered set.
	Given a \(\supseteq\)-chain \((E_i)_{i \in I}\).
	Note that \(I\) is the set of all elements of the \(\supseteq\)-chain, and it can be written as an indexed family, which can be done via using the set itself as index set.
	Consider now \(E = \bigcap_{i \in I}E_i\).
	Obviously, \(E\) is an upper bound of \((E_i)_{i \in I}\), i.e., \(E_i \supseteq E\), for all \(i \in I\).
	It remains to show that \(E\) is a stable trap set of \(P\).
	
	Consider a two-valued interpretation \(A\).
	\(A \in E\) \ifftext \(A \in E_i, \forall i \in I\).
	This implies that \(\Fop{A} \in E_i, \forall i \in I\) since \(E_i\) is a stable trap set of \(P\).
	Then \(\Fop{A} \in E\).
	This implies that \(\{\Fop{B} \mid B \in E\} \subseteq E\).
	By definition, \(E\) is a stable trap set of \(P\).
	
	By definition, \(\subseteq\)-minimal stable trap set is equivalent to a \(\supseteq\)-maximal stable trap set.
	By Lemma~\ref{lem:Zorn-lemma-strengthened}, if \(S\) is a stable trap set of \(P\), then there exists a \(\subseteq\)-minimal stable trap set \(S'\) of \(P\) such that \(S \supseteq S'\) (equivalently, \(S' \subseteq S\)).
\end{proofE}

This is similar for supported trap sets:

\begin{lemmaE}[][end, restate, text link=]\label{lem:NLP-Su-trap-set-min-s}
	Given an \nlp \(P\), if a non-empty set \(S\) of two-valued interpretations of \(P\) is a supported trap set of \(P\), then there exists a \(\subseteq\)-minimal supported trap set \(S'\) of \(P\) such that \(S' \subseteq S\).
\end{lemmaE}
\begin{proofE}[text proof=]
	The proof of Lemma~\ref{lem:NLP-St-trap-set-min-s} can be applied here by substituting \(\Fop{I}\) in Lemma~\ref{lem:NLP-St-trap-set-min-s} with \(\Top{I}\).
\end{proofE}

Based on Lemma~\ref{lem:NLP-St-trap-set-min-s} 
, we obtain the following main result:

\begin{theoremE}[\textbf{main result}][end, restate, text link=]\label{theo:NLP-min-St-trap-set-strict-StC}
	Consider an \nlp \(P\).
	A non-empty set \(S\) of two-valued interpretations is a \(\subseteq\)-minimal stable trap set of \(P\) \ifftext \(S\) is a strict stable class of \(P\).
\end{theoremE}
\begin{proofE}[text proof=]
	``\(\Rightarrow\)'' Assume that \(S\) is a \(\subseteq\)-minimal stable trap set of \(P\).
	Then \(\{\Fop{I} \mid I \in S\} \subseteq S\) by definition.
	Suppose that \(\{\Fop{I} \mid I \in S\} \subsetneq S\).
	Then there exists \(J \in S\) such that for all \(I \in S\), \(\Fop{I} \neq J\).
	We have \(\{\Fop{I} \mid I \in S \setminus \{J\}\} \subseteq \{\Fop{I} \mid I \in S\} \subseteq S \setminus \{J\}\).
	Thus \(S \setminus \{J\}\) is a stable trap set of \(P\), which contradicts the \(\subseteq\)-minimality of \(S\).
	Hence \(\{\Fop{I} \mid I \in S\} = S\), i.e., \(S\) is a stable class of \(P\).
	Since a stable class is a stable trap set, \(S\) must be a strict stable class.
	
	``\(\Leftarrow\)'' Assume that \(S\) is a strict stable class of \(P\).
	Then \(\{\Fop{I} \mid I \in S\} = S\) by definition.
	This implies that \(S\) is a stable trap set of \(P\).
	By Lemma~\ref{lem:NLP-St-trap-set-min-s}, there is a \(\subseteq\)-minimal stable trap set \(S'\) such that \(S' \subseteq S\).
	We have \(\{\Fop{I} \mid I \in S'\} \subseteq S'\) by definition.
	Suppose that \(\{\Fop{I} \mid I \in S'\} \subsetneq S'\).
	Then there is \(J \in S'\) such that \(\Fop{I} \neq J\) for all \(I \in S'\).
	We have \(\{\Fop{I} \mid I \in S' \setminus \{J\}\} \subseteq \{\Fop{I} \mid I \in S'\} \subseteq S' \setminus \{J\}\).
	By definition, \(S' \setminus \{J\}\) is a stable trap set of \(P\), which contradicts the \(\subseteq\)-minimality of \(S'\).
	Thus, \(\{\Fop{I} \mid I \in S'\} = S'\), i.e., \(S'\) is a stable class of \(P\).
	\(S\) is a strict stable class of \(P\), thus \(S = S'\) because of the \(\subseteq\)-minimality of a strict stable class.
	Hence, \(S\) is a \(\subseteq\)-minimal stable trap set of \(P\).
\end{proofE}

Analogously, based on Lemma~\ref{lem:NLP-Su-trap-set-min-s} 
, we obtain the following:

\begin{theoremE}[][end, restate, text link=]\label{theo:NLP-min-Su-trap-set-strict-SuC}
	Consider an \nlp \(P\).
	A non-empty set \(S\) of two-valued interpretations is a \(\subseteq\)-minimal supported trap set of \(P\) \ifftext \(S\) is a strict supported class of \(P\).
\end{theoremE}
\begin{proofE}[text proof=]
	The proof of Theorem~\ref{theo:NLP-min-St-trap-set-strict-StC} can be applied here by substituting \(\Fop{I}\) and Lemma~\ref{lem:NLP-St-trap-set-min-s} in Theorem~\ref{theo:NLP-min-St-trap-set-strict-StC} with \(\Top{I}\) and Lemma~\ref{lem:NLP-Su-trap-set-min-s}, respectively.
\end{proofE}

Theorem~\ref{theo:NLP-min-St-trap-set-strict-StC} and Theorem~\ref{theo:NLP-min-Su-trap-set-strict-SuC} immediately entail that every stable (resp.\ supported) trap space always covers at least one strict stable (resp.\ supported) class in an \nlp:

\begin{corollaryE}[][end, restate, text link=]\label{cor:NLP-min-TS-contain-strict-class}
	Consider an \nlp \(P\).
	For every stable (resp.\ supported) trap space \(I\) of \(P\), \(\cset{I}\) contains at least one strict stable (resp.\ supported) class of \(P\).
\end{corollaryE}
\begin{proofE}[text proof=]
	Assume that \(I\) is a stable trap space of \(P\).
	By definition, \(\cset{I}\) is a stable trap set of \(P\).
	There exists a \(\subseteq\)-minimal stable trap set \(S' \subseteq \cset{I}\).
	By Theorem~\ref{theo:NLP-min-St-trap-set-strict-StC}, \(S'\) is also a strict stable class of \(P\).
	
	Assume that \(I\) is a supported trap space of \(P\).
	By definition, \(\cset{I}\) is a supported trap set of \(P\).
	There exists a \(\subseteq\)-minimal supported trap set \(S' \subseteq \cset{I}\).
	By Theorem~\ref{theo:NLP-min-Su-trap-set-strict-SuC}, \(S'\) is also a strict supported class of \(P\).
\end{proofE}

They also naturally entail the disjointness of two strict stable or supported classes:

\begin{propositionE}[][end, restate, text link=]\label{prop:NLP-disjoint-strict-classes}
	Two distinct strict stable (resp.\ supported) classes of an \nlp \(P\) are disjoint.
\end{propositionE}
\begin{proofE}[text proof=]
	We prove the case of strict stable classes first.
	
	Let \(S_1\) and \(S_2\) be two distinct strict stable classes of \(P\).
	By Theorem~\ref{theo:NLP-min-St-trap-set-strict-StC}, \(S_1\) and \(S_2\) are two distinct \(\subseteq\)-minimal stable trap sets of \(P\).
	Assume that \(S_1\) and \(S_2\) are not disjoint.
	It follows that \(S_1 \cap S_2 \neq \emptyset\).
	Consider \(I \in S_1 \cap S_2\).
	We have \(I \in S_1\) and \(I \in S_2\), then \(\Fop{I} \in S_1\) and \(\Fop{I} \in S_2\) because \(S_1\) and \(S_2\) are stable trap sets, thus \(\Fop{I} \in S_1 \cap S_2\).
	It follows that \(S_1 \cap S_2\) is a stable trap set of \(P\).
	Then \(S_1 \cap S_2 \subsetneq S_1\) or \(S_1 \cap S_2 \subsetneq S_2\) must hold because \(S_1 \neq S_2\), which contradicts the \(\subseteq\)-minimality of \(S_1\) and \(S_2\).
	Hence, \(S_1\) and \(S_2\) are disjoint.
	
	For the case of strict supported classes, we only need to replace \(F_P\) and Theorem~\ref{theo:NLP-min-St-trap-set-strict-StC} with \(T_P\) and Theorem~\ref{theo:NLP-min-Su-trap-set-strict-SuC}, respectively.
\end{proofE}

In~\cite{BS1992}, the authors prove the disjointness of two strict stable classes in \pnames only.
In~\cite{IS2012}, the authors prove the disjointness of two strict supported classes in \pnames only.

\subsection{Relationships with the Stable and Supported Partial Model Semantics}\label{subsec:TS-relationships-StPM-SuPM}

We first establish a model-theoretic characterization for supported trap spaces in an \nlp.

\begin{theoremE}[\textbf{main result}][end, restate, text link=]\label{theo:NLP-char-SuTS}
	Given an \nlp \(P\), \(I\) is a supported trap space of \(P\) \ifftext \(I(\rhs{a}) \leq_s I(a)\) for every \(a \in \hb{P}\).
\end{theoremE}
\begin{proofE}[text proof=]
	``\(\Rightarrow\)'' Assume that \(I\) is a supported trap space of \(P\).
	Suppose that there is an atom \(a \in \hb{P}\) such that \(I(a) <_s I(\rhs{a})\).
	By the definition of the order \(\leq_s\), we have \(I(\rhs{a}) = \uval\) and \(I(a) \neq \uval\).
	Since \(I(\rhs{a}) = \uval\), there exists a two-valued interpretation \(J \in \cset{I}\) such that \(J(\rhs{a}) = \neg I(a)\).
	By definition, \(\Top{J}(a) = J(\rhs{a})\), thus \(\Top{J}\) does not belong to \(\cset{I}\).
	This implies that \(\cset{I}\) is not a supported trap set by definition, which is a contradiction.
	Hence, \(I(\rhs{a}) \leq_s I(a)\) for every \(a \in \hb{P}\).
	
	``\(\Leftarrow\)'' Assume that \(I(\rhs{a}) \leq_s I(a)\) for every atom \(a \in \hb{P}\).
	Let \(A\) be a two-valued interpretation in \(\cset{I}\).
	Of course, \(A \leq_s I\).
	By the characteristics of the three-valued logic, we have \(A(\rhs{a}) \leq_s I(\rhs{a})\), then \(A(\rhs{a}) \leq_s I(a)\) for every \(a \in \hb{P}\).
	This implies that \(\Top{A} \leq_s I\), thus \(\Top{A} \in \cset{I}\).
	Hence, \(\cset{I}\) is a supported trap set of \(P\).
	By definition, \(I\) is a supported trap space of \(P\).
\end{proofE}

Theorem~\ref{theo:NLP-char-SuTS} also points out that a supported trap space may not be a three-valued model of \(\comp{P}\), as it considers the order \(\leq_s\), whereas the completion considers the equivalence \(\leftrightarrow\).
Indeed, given the \nlp \(P\) of Example~\ref{exam:NLP-all}, let \(I_4\) be the three-valued interpretation where \(I_4(a_0) = \uval\) and \(I_4(a) = \fval\) for every \(a \in \hb{P} \setminus \{a_0\}\).
It is easy to verify that \(I_4\) is a supported trap space but not a three-valued model of \(\comp{P}\).

To reason about stable trap spaces, we apply the least fixpoint transformation, which always produces a negative \nlp.
We then show that for negative \nlps, the stable transition graph coincides with the supported transition graph, entailing the coincidence between stable and supported trap spaces.

\begin{theoremE}[][end, restate, text link=]\label{theo:neg-NLP-StTG-SuTG}
	The stable and supported transition graphs of a negative \nlp \(P\) are the same.
\end{theoremE}
\begin{proofE}[text proof=]
	It has been shown in~\cite{IS2012} that for any two-valued interpretation \(I\), \(\Fop{I} = \Top{I}\), if \(P\) is a negative \nlp.
	By the definitions of stable and supported transition graphs, we have that \(\tgst{P} = \tgsp{P}\).
\end{proofE}

\begin{corollaryE}[][end, restate, text link=]\label{cor:neg-NLP-StTS-SuTS}
	Let \(P\) be a negative \nlp.
	Then the set of stable trap spaces of \(P\) coincides with the set of supported trap spaces of \(P\).
\end{corollaryE}
\begin{proofE}[text proof=]
	This immediately follows from Theorem~\ref{theo:neg-NLP-StTG-SuTG}.
\end{proofE}

\begin{theorem}[\cite{YY1994}]\label{theo:neg-NLP-StPM-SuPM}
	Given a negative \nlp \(P\), then the set of stable partial models of \(P\) coincides with the set of supported partial models of \(P\).
\end{theorem}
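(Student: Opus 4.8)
The plan is to compare the two operators whose fixed points define the respective model classes. A three-valued interpretation $I$ is a stable partial model of $P$ exactly when $I$ equals the $\leq_t$-least three-valued model of the reduct $P^I$ (which is well defined, since $P^I$ is positive), and $I$ is a supported partial model of $P$ exactly when $I(a) = I(\rhs{a})$ for every $a \in \hb{P}$. Writing $\Phi(I)$ for the $\leq_t$-least three-valued model of $P^I$ and $\Psi(I)$ for the interpretation defined by $\Psi(I)(a) = I(\rhs{a})$, it therefore suffices to prove that $\Phi(I) = \Psi(I)$ for every three-valued interpretation $I$ whenever $P$ is negative; the two fixed-point conditions then coincide, giving the claimed equality of model sets. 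This is the three-valued counterpart of the identity $\Fop{I} = \Top{I}$ used in Theorem~\ref{theo:neg-NLP-StTG-SuTG}.

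First I would analyze the shape of the reduct of a negative program. Since $\pbody{r} = \emptyset$ for every rule $r$, a rule $r \colon a \gets \dng{b_1}, \dots, \dng{b_k}$ is, after forming $P^I$, either deleted (when $I(b_i) = \tval$ for some $i$) or turned into a rule whose body is a conjunction consisting solely of copies of the special atom $\textbf{u}$: empty, hence $\tval$, when every $I(b_i) = \fval$, and equal to $\uval$ when some $I(b_i) = \uval$ but none equals $\tval$. The crucial observation is that this constant body value matches $I(\bodyf{r})$ in every case: deletion corresponds to $I(\bodyf{r}) = \fval$, an empty body to $I(\bodyf{r}) = \tval$, and a body of $\textbf{u}$'s to $I(\bodyf{r}) = \uval$, because $\bodyf{r} = \bigwedge_{v \in \nbody{r}} \neg v$ and the three-valued negation sends $\tval, \fval, \uval$ to $\fval, \tval, \uval$.

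The key step is then the least-model computation. Because every reduct body contains no atom of $\hb{P}$ (only $\textbf{u}$, or nothing), the bodies evaluate to fixed values independent of the interpretation being built, so the $\leq_t$-least three-valued model $\Phi(I)$ of the positive program $P^I$ is reached in a single step: for each $a$, $\Phi(I)(a)$ is the $\leq_t$-maximum of the reduct body values over the rules with head $a$, and $\fval$ when there is no such rule. By the matching established above and the fact that $\rhs{a} = \bigvee_{r \in \gr{P}, \head{r} = a} \bodyf{r}$ evaluates under $I$ to exactly this $\leq_t$-maximum (with the convention $I(\rhs{a}) = \fval$ when $a$ heads no rule), we obtain $\Phi(I)(a) = I(\rhs{a}) = \Psi(I)(a)$ for every $a$, and hence $\Phi = \Psi$.

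Finally, the equality $\Phi = \Psi$ immediately yields the theorem: $I$ is a stable partial model iff $I = \Phi(I) = \Psi(I)$ iff $I$ is a supported partial model. The main obstacle is the careful bookkeeping of the three-valued reduct---in particular verifying that deleting a rule, collapsing it to a fact, and replacing surviving negative literals by $\textbf{u}$ line up precisely with the three truth values of $I(\bodyf{r})$---together with checking the degenerate cases (facts with empty negative body, and atoms appearing in no rule head) so that the disjunction valuation $I(\rhs{a})$ agrees with the least-model value exactly.
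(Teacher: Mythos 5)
The paper does not prove this statement at all: it is imported verbatim from~\cite{YY1994} as a black-box result, so there is no in-paper argument to compare yours against. Your blind proof is, however, correct and self-contained, and it is the natural three-valued analogue of the two-valued identity \(\Fop{I} = \Top{I}\) that the paper invokes (via~\cite{IS2012}) in Theorem~\ref{theo:neg-NLP-StTG-SuTG}. The case analysis of the reduct of a negative program is right: deletion, collapse to a fact, and replacement of surviving literals by \(\textbf{u}\) correspond exactly to \(I(\bodyf{r})\) being \(\fval\), \(\tval\), and \(\uval\) respectively, and since no atom of \(\hb{P}\) survives in any reduct body, the \(\leq_t\)-least model of \(P^I\) is obtained in one step as the \(\leq_t\)-maximum of these constant body values per head (with \(\fval\) for headless atoms), which is precisely \(I(\rhs{a})\). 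The only point worth stating explicitly, which you gloss slightly, is that the deleted rules contribute \(\fval\) and therefore never affect the \(\leq_t\)-maximum, so summing over all ground rules with head \(a\) rather than only the surviving ones is harmless; with that observation the identity \(\Phi = \Psi\) and hence the coincidence of the two fixed-point sets is complete.
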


\begin{theorem}[Theorem 3.1 of~\cite{AD1995}]\label{theo:NLP-lfp-model-equivalence}
	Consider an \nlp \(P\).
	Let \(\lfp{P}\) be the least fixpoint of \(P\).
	Then \(P\) and \(\lfp{P}\) have the same set of stable partial models, the same sets of regular models, and the same set of stable models.
\end{theorem}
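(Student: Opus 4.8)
The plan is to reduce all three statements to a single lemma about reducts: for every three-valued interpretation $I$ over $\hb{P}$, the $\leq_t$-least three-valued model of $P^I$ coincides with that of $(\lfp{P})^I$. Granting this, the theorem follows uniformly. Indeed, $I$ is a stable partial model of $P$ iff $I$ equals the $\leq_t$-least model of $P^I$, which by the lemma holds iff $I$ equals the $\leq_t$-least model of $(\lfp{P})^I$, i.e.\ iff $I$ is a stable partial model of $\lfp{P}$; hence $P$ and $\lfp{P}$ have identical sets of stable partial models. Regular models are exactly the $\leq_s$-minimal stable partial models, so equality of the two sets forces equality of their $\leq_s$-minimal elements, giving the same regular models. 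Finally, restricting to two-valued $I$ (where the reduct contains no occurrence of \textbf{u} and its least model is two-valued) specializes the lemma to yield identical sets of stable models.

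It remains to prove the lemma. Write $M$ and $M'$ for the $\leq_t$-least three-valued models of $P^I$ and $(\lfp{P})^I$ respectively; both are obtained as the limit of the $\leq_t$-monotone three-valued immediate-consequence operator iterated from the all-false valuation, so I would argue by induction on the derivation stage. For $M \leq_t M'$, I would prove the stronger statement that whenever an atom $a$ attains a value $v \in \{\uval, \tval\}$ in $P^I$, there is a \emph{negative} rule of $\lfp{P}$ with head $a$ whose body formula evaluates under $I$ to a value $\geq_t v$. If $a$ first attains $v$ via a rule $r \in \gr{P}$ with positive body $q_1, \dots, q_j$ derived at earlier stages, the induction hypothesis supplies, for each $q_i$, a fully unfolded negative rule $\rho_i$ of $\lfp{P}$; then $\translfp{r}{(\{\rho_1, \dots, \rho_j\})}$ is again a negative rule of $\lfp{P}$ (lying in some finite stage $\lfp{P}_n$), whose body is the union of the negative literals of $r$ and of the $\rho_i$. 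Since the value of a conjunction of negative literals under $I$ is the $\leq_t$-minimum of the contributions, transitivity of $\min$ shows this unfolded rule confers at least value $v$ on $a$ in $(\lfp{P})^I$; the union $\bigcup_i \lfp{P}_i$ is exactly what guarantees that such an arbitrarily deep unfolding is available. For the converse $M' \leq_t M$, every rule of $\lfp{P}$ is $\translfp{r}{(\{r_1, \dots, r_j\})}$ for a finite chain of rules of $\gr{P}$; if it supports $a$ in $(\lfp{P})^I$, then each link $r_i$ supports $q_i$ in $P^I$ (its negative body is contained in the unfolded body), and then $r$ supports $a$, so $a$ receives a value $\geq_t$ the same amount in $M$.

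The step I expect to be the main obstacle is verifying that taking the reduct commutes with the unfolding operator $\translfp{P}{}$ in the three-valued setting. Concretely, a negative literal $\dng{q_i^m}$ appearing in an unfolded rule must receive exactly the same treatment in $(\lfp{P})^I$---kept, deleted, or replaced by \textbf{u}---as it does when one first reduces the contributing rule $r_i$ and then composes derivations in $P^I$; this is the content of the ``transitivity of $\min$'' appeals above. The two-valued case is routine, but the three-valued case is delicate: an unknown negative literal contributes the value $\uval$ and must be tracked through the successive $\leq_t$-minima so that undefined atoms propagate identically on both sides. One must also check the degenerate case where a positive body atom has no defining rule, so that the enclosing rule is non-unfoldable and contributes nothing to $\lfp{P}$, matching the fact that it can never fire in $P^I$. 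I would isolate the commutation statement ``$(\translfp{P}{(Q)})^I$ agrees with the composition of $Q^I$-derivations'' as a preliminary observation; once established, both inductions above go through and the remaining stage-by-stage bookkeeping is mechanical.
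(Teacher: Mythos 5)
The paper does not prove this statement at all: it is imported verbatim as Theorem~3.1 of the cited reference \cite{AD1995} and used as a black box, so there is no in-paper proof to compare your argument against. Judged on its own terms, your reduction is the right one and is essentially how such unfold/fold preservation results are established in the literature: a single reduct-level lemma (the $\leq_t$-least three-valued models of $P^I$ and $(\lfp{P})^I$ coincide for every $I$) does yield all three claims uniformly, since regular models are by the paper's definition exactly the $\leq_s$-minimal stable partial models and stable models are the two-valued case. Your two inductions are also structured correctly: the forward direction needs the full union $\bigcup_i \lfp{P}_i$ to supply arbitrarily deep unfoldings, and the backward direction decomposes each unfolded rule into its finitely many constituent rules of $\gr{P}$. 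Two points worth making explicit if you write this out. First, your stage-by-stage induction is legitimate even over an infinite Herbrand base because every rule body is finite, so the three-valued immediate-consequence operator of the (positive) reduct is continuous and its least fixed point is reached at stage $\omega$; without that remark a reader may worry that transfinite stages are needed. Second, the ``commutation of reduct with unfolding'' you isolate is exactly where the content lies, and the clean way to discharge it is to observe that for a \emph{negative} rule $\rho$ the value its reduct confers on $\head{\rho}$ in the least model of $(\lfp{P})^I$ is precisely $I(\bodyf{\rho})$ (deletion corresponds to value $\fval$, surviving $\mathbf{u}$'s to $\uval$, full erasure to $\tval$), after which both directions reduce to comparing $\leq_t$-minima of body formulas and the bookkeeping really is mechanical. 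Also check the degenerate case $\lfp{P}=\emptyset$ (e.g.\ $P=\{a\leftarrow a\}$), where you must keep the interpretations ranging over $\hb{P}$ rather than over the atoms occurring in $\lfp{P}$.
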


Based on the property that \(\Fop{I} = F_{\lfp{P}}(I)\) for any two-valued interpretation \(I\), proved in~\cite{IS2012}, we directly obtain the following:

\begin{theoremE}[][end, restate, text link=]\label{theo:NLP-lfp-StTG}
	Let \(P\) be an \nlp and \(\lfp{P}\) denote the least fixpoint of \(P\).
	Then \(\tgst{P} = \tgst{\lfp{P}}\), i.e., \(P\) and \(\lfp{P}\) have the same stable transition graph.
\end{theoremE}
\begin{proofE}[text proof=]
	It has been shown in~\cite{IS2012} that for any two-valued interpretation \(I\), we have that \(\Fop{I} = F_{\lfp{P}}(I)\).
	By the definition of stable transition graph, we have that \(\tgst{P} = \tgst{\lfp{P}}\).
\end{proofE}

Since stable trap spaces are defined based on the stable transition graph, the above result immediately implies the following: 

\begin{corollaryE}[][end, restate, text link=]\label{cor:NLP-lfp-StTS}
	Let \(P\) be an \nlp and \(\lfp{P}\) denote the least fixpoint of \(P\).
	Then \(P\) and \(\lfp{P}\) have the same set of stable trap spaces.
\end{corollaryE}
\begin{proofE}[text proof=]
	This immediately follows from Theorem~\ref{theo:NLP-lfp-StTG}.
\end{proofE}

By Theorem~\ref{theo:NLP-char-SuTS}, it is easy to derive the following simple relationship between supported partial models and supported trap spaces of an \nlp:

\begin{propositionE}[][end, restate, text link=]\label{prop:NLP-SuPM-is-SuTS}
	Given an \nlp \(P\), if \(I\) is a supported partial model of \(P\), then it is also a supported trap space of \(P\).
\end{propositionE}
\begin{proofE}[text proof=]
	Assume that \(I\) is a supported partial model of \(P\).
	Then \(I\) is a three-valued model of \(\comp{P}\).
	It follows that \(I(a) = I(\rhs{a})\), \(\forall a \in \hb{P}\).
	Thus, \(I\) also satisfies the condition \(I(\rhs{a}) \leq_s I(a)\), \(\forall a \in \hb{P}\).
	By Theorem~\ref{theo:NLP-char-SuTS}, \(I\) is also a supported trap space of \(P\).
\end{proofE}

By applying the least fixpoint transformation, we obtain a similar relationship:

\begin{propositionE}[][end, restate, text link=]\label{prop:NLP-StPM-is-StTS}
	Given an \nlp \(P\), if \(I\) is a stable partial model of \(P\), then it is also a stable trap space of \(P\).
\end{propositionE}
\begin{proofE}[text proof=]
	Assume that \(I\) is a stable partial model of \(P\). \\
	Then \(I\) is also a stable partial model of \(\lfp{P}\) by Theorem~\ref{theo:NLP-lfp-model-equivalence}. \\
	Since \(\lfp{P}\) is a negative \nlp, \(I\) is also a supported partial model of \(\lfp{P}\) by Theorem~\ref{theo:neg-NLP-StPM-SuPM}. \\
	By Proposition~\ref{prop:NLP-SuPM-is-SuTS}, \(I\) is a supported trap space of \(\lfp{P}\). \\
	By Corollary~\ref{cor:neg-NLP-StTS-SuTS}, \(I\) is also a stable trap space of \(\lfp{P}\). \\
	By Corollary~\ref{cor:NLP-lfp-StTS}, \(I\) is a stable trap space of \(P\).
\end{proofE}

Next, we establish a deeper relationship between supported trap spaces and supported partial models.
We start with defining a new concept as follows:

\begin{definition}\label{def:NLP-next-SuTS}
	Consider an \nlp \(P\).
	We define a mapping \(f_P \colon \suts{P} \to \suts{P}\) as follows: if \(m\) is a supported trap space, then \(f_P(m)(v) = m(\rhs{v}), \forall v \in \hb{P}\).
	The mapping \(f_P\) is well-defined because \(f_P(m)\) is always a supported trap space if \(m\) is a supported trap space (see Lemma~\ref{lem:NLP-next-SuTS}).
\end{definition}

We then show that the mapping \(f_{P}\) preserves the supported trap space property:

\begin{lemmaE}[][end, restate, text link=]\label{lem:NLP-next-SuTS}
	Given an \nlp \(P\) and a supported trap space \(I\) of \(P\),  \(f_P(I)\) is a supported trap space of \(P\) and \(I \leq_i f_P(I)\).
\end{lemmaE}
\begin{proofE}[text proof=]
	By definition, \(f_P(I)(v) = I(\rhs{v})\) for all \(v \in \hb{P}\).
	Since \(I\) is a supported trap space of \(P\), \(I(v) \leq_i I(\rhs{v})\) for all \(v \in \hb{P}\) by Theorem~\ref{theo:NLP-char-SuTS}.
	Thus, \(I \leq_i f_P(I)\).
	Then, by the truth table of the three-valued logic, \(I(\rhs{v}) \leq_i f_P(I)(\rhs{v})\) for all \(v \in \hb{P}\).
	This implies that \(f_P(I)(v) \leq_i f_P(I)(\rhs{v})\) for all \(v \in \hb{P}\).
	Hence, \(f_P(I)\) is a supported trap space of \(P\) by Theorem~\ref{theo:NLP-char-SuTS}.
\end{proofE}

By combining Lemma~\ref{lem:NLP-next-SuTS} and the use of Hartogs' lemma (see Lemma~\ref{lem:Hartogs}), we obtain the following main result:

\begin{theoremE}[\textbf{main result}][end, restate, text link=]\label{theo:NLP-SuTS-reach-unique-SuPM}
	Consider an \nlp \(P\).
	For every supported trap space \(I \in \suts{P}\), there is a unique supported partial model \(I' \in \supms{P}\) such that \(I' \leq_s I\).
\end{theoremE}
\begin{proofE}[text proof=]
	We define a function \(K\) recursively on the ordinals as follows:
	\begin{align*}
		K(0) &= I \\
		K(\alpha + 1) &= f_P(K(\alpha)).
	\end{align*}
	If \(\beta\) is a limit ordinal, then by construction and Lemma~\ref{lem:NLP-next-SuTS} \(\{K(\alpha) \colon \alpha < \beta\}\) is a chain w.r.t. \(\leq_i\) in \(\suts{P}\).
	Define \(K(\beta) = \text{sup}_{\leq_i}\{K(\alpha) \colon \alpha < \beta\}\), where \(\text{sup}_{\leq_i}\) denotes the least upper bound w.r.t. \(\leq_i\).
	We have \(K\) is an increasing function from the ordinals into \(\suts{P}\).
	It cannot be strictly increasing, as if it was, then we would have an injection from the ordinals into \(\suts{P}\), which violates Lemma~\ref{lem:Hartogs}.
	Hence, \(K\) must be eventually constant.
	It follows that for some ordinal \(\alpha\), \(K(\alpha + 1) = K(\alpha)\), then \(f_P(K(\alpha)) = K(\alpha)\).
	Let \(\alpha^*\) be the least ordinal such that \(K(\alpha^* + 1) = K(\alpha^*)\).
	We have \(K(\beta) = K(\alpha^*)\) for all \(\beta > \alpha^*\).
	Hence, \(K(\alpha^*)\) is unique.
	Set \(I' = K(\alpha^*)\).
	We have \(f_P(I') = I'\), i.e., \(I'\) is a three-valued model of \(\comp{P}\).
	Therefore, \(I' \in \supms{P}\).
	By construction and Lemma~\ref{lem:NLP-next-SuTS}, \(I \leq_i I'\).
	This implies that \(I' \leq_s I\).
\end{proofE}

The beauty of Theorem~\ref{theo:NLP-SuTS-reach-unique-SuPM} is that the function \(f_{P}\) delivers a unique supported partial model starting from any supported trap space.
This insight is key to prove the following results.

\begin{corollaryE}[][end, restate, text link=]\label{cor:NLP-min-s-SuPM-min-s-SuTS}
	Consider an \nlp \(P\).
	A three-valued interpretation \(I\) is a \(\leq_s\)-minimal supported partial model of \(P\) \ifftext \(I\) is a \(\leq_s\)-minimal supported trap space of \(P\).
\end{corollaryE}
\begin{proofE}[text proof=]
	Recall that \(\suts{P}\) (resp.\ \(\supms{P}\)) denotes the set of all supported trap spaces (resp.\ partial models) of \(P\).
	We have \(\supms{P} \subseteq \suts{P}\) by Proposition~\ref{prop:NLP-SuPM-is-SuTS}.
	
	``\(\Rightarrow\)'' Assume that \(I\) is not a \(\leq_s\)-minimal supported trap space of \(P\).
	Then there exists \(I' \in \suts{P}\) such that \(I' <_s I\).
	By Theorem~\ref{theo:NLP-SuTS-reach-unique-SuPM}, there exists \(J \in \supms{P}\) such that \(J \leq_s I'\).
	This implies that \(J <_s I\), which contradicts the \(\leq_s\)-minimality of \(I\).
	Hence, \(I\) is a \(\leq_s\)-minimal supported trap space of \(P\).
	
	``\(\Leftarrow\)'' This direction is trivial because \(\supms{P} \subseteq \suts{P}\).
\end{proofE}

Corollary~\ref{cor:NLP-min-s-SuPM-min-s-SuTS} is insightful as it indicates that, although the set of supported partial models may be different from the set of supported trap spaces, their \(\leq_s\)-minimal elements are the same.
Hereafter, we show an interesting implication of the above relationships between supported trap spaces and supported partial models.
We start by proving that a supported class that is included in a supported trap space is also included in the application of \(f_{P}\) to this supported trap space:

\begin{lemmaE}[][end, restate, text link=]\label{lem:NLP-SuC-in-next-interpretation}
	Consider an \nlp \(P\).
	Let \(I\) be a supported trap space of \(P\).
	If \(S\) is a supported class of \(P\) such that \(S \subseteq \cset{I}\), then \(S \subseteq \cset{f_{P}(I)}\).
\end{lemmaE}
\begin{proofE}[text proof=]
	Consider \(J \in S\).
	Since \(S \subseteq \cset{I}\), we have \(J \in \cset{I}\).
	Let \(J' = \Top{J}\).
	For any atom \(a \in \hb{P}\), we have \(J'(a) = J(\rhs{a}) \leq_s I(\rhs{a}) = f_{P}(I)(a)\).
	This implies that \(J' \in \cset{f_{P}(I)}\).
	Hence, \(\{\Top{J} \mid J \in S\} \subseteq \cset{f_{P}(I)}\).
	Since \(S\) is a supported class, \(\{\Top{J} \mid J \in S\} = S\) by definition.
	This implies that \(S \subseteq \cset{f_{P}(I)}\).
\end{proofE}

We now prove the case of supported classes:

\begin{propositionE}[][end, restate, text link=]\label{prop:NLP-SuPM-cover-SuC}
	Given an NLP \(P\), if \(S\) is a supported class, then \(\langle S\rangle^{sp}_{P}\) is a supported partial model.
\end{propositionE}
\begin{proofE}[text proof=]
	By Proposition~\ref{prop:NLP-unique-minimal-covered-TS}, \(\langle S\rangle^{sp}_{P}\) is a supported trap space of \(P\).
	Assume that it is not a supported partial model of \(P\).
	Then by definition, we have \(f_{P}(\langle S\rangle^{sp}_{P}) \neq \langle S\rangle^{sp}_{P}\).
	By Lemma~\ref{lem:NLP-next-SuTS}, \(f_{P}(\langle S\rangle^{sp}_{P}) \leq_s \langle S\rangle^{sp}_{P}\).
	It follows that \(f_{P}(\langle S\rangle^{sp}_{P}) <_s \langle S\rangle^{sp}_{P}\).
	By Lemma~\ref{lem:NLP-SuC-in-next-interpretation}, \(S \subseteq \cset{f_{P}(\langle S\rangle^{sp}_{P})}\) because \(S\) is a supported class and \(S \subseteq \cset{\langle S\rangle^{sp}_{P}}\) by definition.
	This implies that \(\langle S\rangle^{sp}_{P}\) is not a \(\leq_s\)-minimal supported trap space covering \(S\), which is a contradiction.
	Hence, \(\langle S\rangle^{sp}_{P}\) is a supported partial model of \(P\).
\end{proofE}

This is similar for stable classes:

\begin{propositionE}[][end, restate, text link=]\label{prop:NLP-StPM-cover-StC}
	Given an NLP \(P\), if \(S\) is a stable class of \(P\), then \(\langle S\rangle^{st}_{P}\) is a stable partial model.
\end{propositionE}
\begin{proofE}[text proof=]
	By Theorem~\ref{theo:NLP-lfp-StTG} and Theorem~\ref{theo:neg-NLP-StTG-SuTG}, 
	\(\tgst{P} = \tgst{\lfp{P}} = \tgsp{\lfp{P}}\).
	Hence, \(S\) is a supported class of \(\lfp{P}\).
	In addition, \(\langle S\rangle^{st}_{P} = \langle S\rangle^{sp}_{\lfp{P}}\).
	By Proposition~\ref{prop:NLP-SuPM-cover-SuC}, \(\langle S\rangle^{sp}_{\lfp{P}}\) is a supported partial model of \(\lfp{P}\).
	It is also a stable partial model of \(\lfp{P}\) by Theorem~\ref{theo:neg-NLP-StPM-SuPM}, since \(\lfp{P}\) is negative.
	By Theorem~\ref{theo:NLP-lfp-model-equivalence}, \(\langle S\rangle^{st}_{P}\) is a stable partial model of \(P\).
\end{proofE}

Proposition~\ref{prop:NLP-SuPM-cover-SuC} (resp.\ Proposition~\ref{prop:NLP-StPM-cover-StC}) gives us deeper understanding about relationships between the supported (resp.\ stable) classes and supported (resp.\ stable) partial models of an \nlp.

\subsection{Relationships with the Regular Model Semantics}\label{subsec:TS-relationships-RegM}

Recall that a stable trap space may not be a stable partial model.
Indeed, given the \nlp \(P\) of Example~\ref{exam:NLP-all}, let \(I_4\) be the three-valued interpretation where \(I_4(a_0) = \uval\) and \(I_4(a) = \fval\) for every \(a \in \hb{P} \setminus \{a_0\}\).
It is easy to verify that \(I_4\) is a stable trap space of \(P\), but it is not a stable partial model of \(P\).
However, since every two-valued interpretation of \(P\) has an arc to \(I_1\) 
in the stable transition graph of \(P\), for every stable trap space \(I\) of \(P\), \(I(a) \neq 1\) for every \(a \in \hb{P}\).
This implies that \(I_1\) is the \(\leq_s\)-minimal stable trap space of \(P\).
Since \(I_1\) is also the regular model of \(P\), we wonder if the set of regular models of \(P\) coincides with the set of \(\leq_s\)-minimal stable trap spaces of \(P\), which if true is really interesting.
We answer this question for \nlps as follows:

\begin{theoremE}[\textbf{main result}][end, restate, text link=]\label{theo:NLP-RegM-s-min-StTS}
	Given an \nlp \(P\), a three-valued interpretation \(I\) is a regular model of \(P\) \ifftext \(I\) is a \(\leq_s\)-minimal stable trap space of \(P\).
\end{theoremE}
\begin{proofE}[text proof=]
	The programs \(P\) and \(\lfp{P}\) have the same set of regular models by Theorem~\ref{theo:NLP-lfp-model-equivalence}.
	They also have the same set of stable trap spaces by Corollary~\ref{cor:NLP-lfp-StTS}.
	Hence, it suffices to show that \(I\) is a regular model of \(\lfp{P}\) \ifftext \(I\) is a \(\leq_s\)-minimal stable trap space of \(\lfp{P}\).
	
	Recall that \(\lfp{P}\) is a negative NLP.
	Then \(I\) is a regular model of \(\lfp{P}\) \\
	\ifftext \(I\) is a \(\leq_s\)-minimal stable partial model of \(\lfp{P}\) by definition \\
	\ifftext \(I\) is a \(\leq_s\)-minimal supported partial model of \(\lfp{P}\) by Theorem~\ref{theo:neg-NLP-StPM-SuPM} \\
	\ifftext \(I\) is a \(\leq_s\)-minimal supported trap space of \(\lfp{P}\) by Corollary~\ref{cor:NLP-min-s-SuPM-min-s-SuTS} \\
	\ifftext \(I\) is a \(\leq_s\)-minimal stable trap space of \(\lfp{P}\) by Corollary~\ref{cor:neg-NLP-StTS-SuTS}.
\end{proofE}

By Corollary~\ref{cor:NLP-min-TS-contain-strict-class}, Proposition~\ref{prop:NLP-two-min-ts-separation}, and Theorem~\ref{theo:NLP-RegM-s-min-StTS}, each regular model of an \nlp contains at least one strict stable class of this \nlp, which is a new insight to the best of our knowledge.

\subsection{Relationships with the L-stable Model Semantics}\label{subsec:TS-relationships-L-StM}

We first define a new order on \(\threed{}\), which considers only the undefined value \(\uval\):

\begin{definition}\label{def:NLP-u-order}
	The order \(\leq_u\) on \(\threed{}\) is given by \(0 \leq_u 1\), \(1 \leq_u 0\), \(0 <_u \star\), \(1 <_u \star\), and there is no other relation.
	Given two three-valued interpretations \(I_1\) and \(I_2\) of an \nlp \(P\), \(I_1 \leq_u I_2\) \ifftext \(I_1(v) \leq_u I_2(v)\) for all \(v \in \hb{P}\).
	It is easy to see that \(I_1 \leq_s I_2\) implies \(I_1 \leq_u I_2\).
\end{definition}

\begin{propositionE}[][end, restate, text link=]\label{prop:NLP-u-order-subset}
	Given two three-valued interpretations \(I_1\) and \(I_2\) of an \nlp \(P\), \(I_1 \leq_u I_2\) \ifftext \(I_1^{\textbf{u}} \subseteq I_2^{\textbf{u}}\).
\end{propositionE}
\begin{proofE}[text proof=]
	We have \(I_1 \leq_u I_2\) \\
	\ifftext \(I_1(v) \leq_u I_2(v)\) for all \(v \in \hb{P}\) \\
	\ifftext either \(I_1(v), I_2(v) \neq \star\) or \(I_1(v) <_u I_2(v)\) for all \(v \in \hb{P}\) \\
	\ifftext \(I_1^{\textbf{u}} \subseteq I_2^{\textbf{u}}\).
\end{proofE}

Proposition~\ref{prop:NLP-u-order-subset} entails an equivalent definition for an L-stable model:

\begin{corollaryE}[][end, restate, text link=]\label{cor:NLP-L-StM-u-min-StPM}
	Given an \nlp \(P\), \(I\) is an L-stable model \ifftext it is a \(\leq_u\)-minimal stable partial model.
\end{corollaryE}
\begin{proofE}[text proof=]
	This immediately follows from Proposition~\ref{prop:NLP-u-order-subset}.
\end{proofE}

It is easy to derive that an L-stable model is also a regular model by Corollary~\ref{cor:NLP-L-StM-u-min-StPM} and the fact that \(I_1 \leq_s I_2\) implies \(I_1 \leq_u I_2\).
However, the reverse might be not true in general.
The relationship between \(\leq_s\) and \(\leq_u\) leads us to the following, which is similar to Theorem~\ref{theo:NLP-SuTS-reach-unique-SuPM}:

\begin{propositionE}[][end, restate, text link=]\label{prop:NLP-SuTS-reach-u-SuPM}
	Consider an \nlp \(P\).
	For every supported trap space \(I \in \suts{P}\), there is a supported partial model \(J \in \supms{P}\) such that \(J \leq_u I\).
\end{propositionE}
\begin{proofE}[text proof=]
	By Theorem~\ref{theo:NLP-SuTS-reach-unique-SuPM}, there is a unique supported partial model \(J \in \supms{P}\) such that \(J \leq_s I\).
	Since \(I' \leq_s I\) implies \(I' \leq_u I\), we can conclude the proof.
	Note that for \(\leq_u\), such a supported partial model may not be unique.
\end{proofE}

This property immediately implies that the \(\leq_u\)-minimal elements of the set of supported partial models and the set of supported trap spaces are the same.

\begin{corollaryE}[][end, restate, text link=]\label{cor:NLP-u-min-SuPM-u-min-SuTS}
	Consider an \nlp \(P\).
	A three-valued interpretation \(I\) is a \(\leq_u\)-minimal supported partial model of \(P\) \ifftext \(I\) is a \(\leq_u\)-minimal supported trap space of \(P\).
\end{corollaryE}
\begin{proofE}[text proof=]
	We have \(\supms{P} \subseteq \suts{P}\) by Proposition~\ref{prop:NLP-SuPM-is-SuTS}.
	
	``\(\Rightarrow\)'' Assume that \(I\) is not a \(\leq_u\)-minimal supported trap space of \(P\).
	Then there exists \(I' \in \suts{P}\) such that \(I' <_u I\).
	By Proposition~\ref{prop:NLP-SuTS-reach-u-SuPM}, there exists \(J \in \supms{P}\) such that \(J \leq_u I'\).
	This implies that \(J <_u I\), which contradicts the \(\leq_u\)-minimality of \(I\).
	Hence, \(I\) is a \(\leq_u\)-minimal supported trap space of \(P\).
	
	``\(\Leftarrow\)'' This direction is trivial because \(\supms{P} \subseteq \suts{P}\).
\end{proofE}

By using the least fixpoint transformation, we obtain the following equivalence:

\begin{theoremE}[][end, restate, text link=]\label{theo:NLP-L-StM-u-min-StTS}
	Given an \nlp \(P\), \(I\) is an L-stable model \ifftext \(I\) is a \(\leq_u\)-minimal stable trap space.
\end{theoremE}
\begin{proofE}[text proof=]
	We have \(P\) and \(\lfp{P}\) have the same set of stable partial models by Theorem~\ref{theo:NLP-lfp-model-equivalence}.
	They also have the same set of stable trap spaces by Corollary~\ref{cor:NLP-lfp-StTS}.
	Hence, it suffices to show that \(I\) is an L-stable model of \(\lfp{P}\) \ifftext \(I\) is a \(\leq_u\)-minimal stable trap space of \(\lfp{P}\).
	
	Recall that \(\lfp{P}\) is a negative NLP.
	We have \(I\) is an L-stable model of \(\lfp{P}\) \\
	\ifftext \(I\) is a \(\leq_u\)-minimal stable partial model of \(\lfp{P}\) by Corollary~\ref{cor:NLP-L-StM-u-min-StPM} \\
	\ifftext \(I\) is a \(\leq_u\)-minimal supported partial model of \(\lfp{P}\) by Theorem~\ref{theo:neg-NLP-StPM-SuPM} \\
	\ifftext \(I\) is a \(\leq_u\) supported trap space of \(\lfp{P}\) by Corollary~\ref{cor:NLP-u-min-SuPM-u-min-SuTS} \\
	\ifftext \(I\) is a \(\leq_u\)-minimal stable trap space of \(\lfp{P}\) by Corollary~\ref{cor:neg-NLP-StTS-SuTS}.
\end{proofE}

Combining the above relationship, the least fixpoint transformation, and Theorem~\ref{theo:NLP-RegM-s-min-StTS}, we obtain the following:

\begin{propositionE}[][end, restate, text link=]\label{prop:NLP-L-StM-u-min-RegM}
	Given an \nlp \(P\), \(I\) is an L-stable model of \(P\) \ifftext \(I\) is a \(\leq_u\)-minimal regular model.
\end{propositionE}
\begin{proofE}[text proof=]
	By Theorem~\ref{theo:NLP-L-StM-u-min-StTS}, it suffices to show that \(I\) is a \(\leq_u\)-minimal stable trap space of \(P\) \ifftext \(I\) is a \(\leq_u\)-minimal regular model of \(P\).
	The forward direction is trivial because \(\regms{P} \subseteq \stts{P}\), i.e., the set of regular models is a subset of the set of stable trap spaces.
	Now, we prove the backward direction.
	
	Assume that \(I\) is a \(\leq_u\)-minimal regular model of \(P\).
	Then \(I\) is a \(\leq_u\)-minimal regular model of \(\lfp{P}\) by Theorem~\ref{theo:NLP-lfp-model-equivalence}.
	Suppose that \(I\) is not a \(\leq_u\)-minimal stable trap space of \(\lfp{P}\).
	Then there is a stable trap space \(I'\) of \(\lfp{P}\) such that \(I' <_u I\).
	By Theorem~\ref{theo:NLP-RegM-s-min-StTS}, the regular models of \(\lfp{P}\) are the \(\leq_s\)-minimal stable trap spaces of \(\lfp{P}\).
	This implies that there is a regular model \(J\) of \(\lfp{P}\) such that \(J \leq_s I'\).
	Since \(J \leq_s I'\) implies \(J \leq_u I'\), \(J <_u I\).
	By Theorem~\ref{theo:NLP-lfp-model-equivalence}, \(J\) is a regular model of \(P\), which contradicts the \(\leq_u\)-minimality of \(I\).
	Hence, \(I\) is a \(\leq_u\)-minimal stable trap space of \(\lfp{P}\).
	By Corollary~\ref{cor:NLP-lfp-StTS}, \(I\) is a \(\leq_u\)-minimal stable trap space of \(P\).
\end{proofE}

The work by~\cite{SZ1997} shows that an L-stable model is a regular model.
Proposition~\ref{prop:NLP-L-StM-u-min-RegM} provides a deeper relationship between the L-stable models and the regular models.
Interestingly, this helps us to reveal the following insight:

\begin{corollaryE}[][end, restate, text link=]\label{cor:NLP-num-L-StM-RegM}
	If an \nlp \(P\) has no L-stable model, then it has infinitely many regular models.
\end{corollaryE}
\begin{proofE}[text proof=]
	By Proposition~\ref{prop:NLP-L-StM-u-min-RegM}, the set of L-stable models of \(P\) coincides with the set of \(\leq_u\)-minimal regular models of \(P\).
	If \(\regms{P}\) is a finite set, then it always has a \(\leq_u\)-minimal element, i.e., the set of \(\leq_u\)-minimal regular models of \(P\) is non-empty, leading to \(P\) has an L-stable model, which is a contradiction.
	Hence, \(P\) has infinitely many regular models.
\end{proofE}

\section{Existence Results}\label{sec:semantics-consistency}

Whether a given semantics is consistent (i.e., it contains at least one respective object) is a main property of interest~\cite{Sato1990,Spanring2017}.
For \pnames, proving or refuting this property is mostly straightforward~\cite{BS2014}.
However, for arbitrary \nlps with an infinite Herbrand base, conducting such proofs or refutations is more intricate.
It often needs a precise use of set-theoretic axioms or equivalent statements.
We here show that previous arguments for the existence of supported classes, strict supported (stable) classes, and regular models are either missing or imprecise; and that the trap space semantics can offer a unified and precise framework to prove the model or class existence.
Furthermore, we reveal problems in the proof of~\cite{SZ1997} for the existence of L-stable models and correct it for uni-rule programs only (see more details in Section~\ref{sec:existence-L-stable}).


We first recall a well-known result:

\begin{theorem}[Theorem 1 of~\cite{BS1992}]\label{theo:NLP-exist-StC}
	Any NLP \(P\) has a stable class.
\end{theorem}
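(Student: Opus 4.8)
The plan is to derive the existence of a stable class directly from the trap-set machinery already assembled, reducing the problem to extracting a $\subseteq$-minimal stable trap set. First I would observe that the full set of two-valued interpretations $2^{\hb{P}}$ is trivially a stable trap set of $P$: since $\Fop{I}$ is itself a two-valued interpretation for every $I$, the closure condition $\{\Fop{I} \mid I \in 2^{\hb{P}}\} \subseteq 2^{\hb{P}}$ holds vacuously. Equivalently, this is the stable trap set $\cset{I^*}$ furnished by Proposition~\ref{prop:NLP-exist-St-Su-TS}, where $I^*$ assigns $\uval$ to every atom. Hence $P$ always possesses at least one non-empty stable trap set.

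Second I would invoke Lemma~\ref{lem:NLP-St-trap-set-min-s} on this starting stable trap set to obtain a $\subseteq$-minimal stable trap set $S' \subseteq 2^{\hb{P}}$. Finally, Theorem~\ref{theo:NLP-min-St-trap-set-strict-StC} identifies every $\subseteq$-minimal stable trap set with a strict stable class, so $S'$ is a strict stable class of $P$; since a strict stable class is in particular a stable class, this proves the theorem. Alternatively, Corollary~\ref{cor:NLP-min-TS-contain-strict-class} applied to the stable trap space $I^*$ gives the same conclusion in one step, as $\cset{I^*}$ then already contains a strict, hence ordinary, stable class.

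The hard part is entirely concentrated in the infinite case and is already discharged by Lemma~\ref{lem:NLP-St-trap-set-min-s}. When $\hb{P}$ is finite, one could argue concretely: iterating $\Fop{\cdot}$ from any interpretation over a finite state space yields an eventually periodic orbit, and the interpretations lying on its cycle form a stable class. With an infinite Herbrand base this naive argument breaks down---orbits need not be periodic and a straightforward iteration need not stabilize---so the minimality of $S'$ cannot be obtained by iteration. Instead it rests on Zorn's lemma (via Lemma~\ref{lem:Zorn-lemma-strengthened}) together with the key verification that the intersection of a $\supseteq$-chain of stable trap sets is again a stable trap set, which is precisely what bounds the descending chains and thus guarantees a minimal element. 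This is why the present trap-space framework rigorously closes the gap left imprecise for the general (non-Datalog) case in~\cite{BS1992}.
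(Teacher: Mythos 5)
Your proposal is correct and follows essentially the same route as the paper's own (alternative) proof: the paper likewise starts from the trivial stable trap set of Proposition~\ref{prop:NLP-exist-St-Su-TS}, extracts a \(\subseteq\)-minimal stable trap set via Lemma~\ref{lem:NLP-St-trap-set-min-s} (packaged as Theorem~\ref{theo:NLP-exist-min-trap-set}), and then shows that set is a stable class by the same remove-an-unreached-element argument that constitutes the forward direction of Theorem~\ref{theo:NLP-min-St-trap-set-strict-StC}. Your citation of Theorem~\ref{theo:NLP-min-St-trap-set-strict-StC} instead of inlining that argument, and your closing remarks on where Zorn's lemma is needed, are both accurate.
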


The proof of Theorem~\ref{theo:NLP-exist-StC} relies on the anti-monotonicity of \(F_P\) and the monotonicity of \(F_P^2\) to construct two two-valued interpretations that form a stable class.
However, this approach may not be directly applicable to supported classes, as \(T_P\) may be neither monotonic nor anti-monotonic.
In~\cite[Proposition 5.6]{IS2012}, the authors assert the guaranteed existence of stable and supported classes but do not provide formal proofs.
While they state that ``As explained in previous sections, existence of stable/supported classes is always guaranteed'', our analysis indicates that this claim is not clearly substantiated by the preceding discussions in their work.
Relying on the existence of \(\subseteq\)-minimal stable (resp.\ supported) trap sets, which can be easily derived from Lemma~\ref{lem:NLP-St-trap-set-min-s} (resp.\ Lemma~\ref{lem:NLP-Su-trap-set-min-s}), we provide an alternative proof for the existence of stable classes (resp.\ a new proof for the existence of supported classes) in an \nlp.

\begin{theoremE}[][end, restate, text link=]\label{theo:NLP-exist-min-trap-set}
	An \nlp \(P\) always has a \(\subseteq\)-minimal stable (or supported) trap set.
\end{theoremE}
\begin{proofE}[text proof=]
	For the case of \(\subseteq\)-minimal stable trap sets, this immediately follows from Lemma~\ref{lem:NLP-St-trap-set-min-s} and the fact that \(P\) has at least one stable trap set by Proposition~\ref{prop:NLP-exist-St-Su-TS}.
	
	For the case of \(\subseteq\)-minimal supported trap sets, this immediately follows from Lemma~\ref{lem:NLP-Su-trap-set-min-s} and the fact that \(P\) has at least one supported trap set by Proposition~\ref{prop:NLP-exist-St-Su-TS}.
\end{proofE}

\begin{proofE}[text proof=Alternative proof of Theorem 5.1]
	By Theorem~\ref{theo:NLP-exist-min-trap-set}, \(P\) has a \(\subseteq\)-minimal stable trap set \(S\).
	We show that \(S\) is a stable class.
	
	By definition, \(\{\Fop{I} \mid I \in S\} \subseteq S\).
	Suppose that \(\{\Fop{I} \mid I \in S\} \subsetneq S\).
	Then there exists \(J \in S\) such that for all \(I \in S\), \(\Fop{I} \neq J\).
	We have \(\{\Fop{I} \mid I \in S \setminus \{J\}\} \subseteq \{\Fop{I} \mid I \in S\} \subseteq S \setminus \{J\}\).
	Thus \(S \setminus \{J\}\) is a supported trap set of \(P\), which contradicts the \(\subseteq\)-minimality of \(S\).
	Hence \(\{\Fop{I} \mid I \in S\} = S\), i.e., \(S\) is a supported class of \(P\).
\end{proofE}

\begin{theoremE}[][end, restate, text link=]\label{theo:NLP-exist-SuC}
	Any \nlp \(P\) has a supported class.
\end{theoremE}
\begin{proofE}[text proof=]
	By Theorem~\ref{theo:NLP-exist-min-trap-set}, \(P\) has a \(\subseteq\)-minimal supported trap set \(S\).
	We show that \(S\) is a supported class.
	
	By definition, \(\{\Top{I} \mid I \in S\} \subseteq S\).
	Suppose that \(\{\Top{I} \mid I \in S\} \subsetneq S\).
	Then there exists \(J \in S\) such that for all \(I \in S\), \(\Top{I} \neq J\).
	We have \(\{\Top{I} \mid I \in S \setminus \{J\}\} \subseteq \{\Top{I} \mid I \in S\} \subseteq S \setminus \{J\}\).
	Thus \(S \setminus \{J\}\) is a supported trap set of \(P\), which contradicts the \(\subseteq\)-minimality of \(S\).
	Hence \(\{\Top{I} \mid I \in S\} = S\), i.e., \(S\) is a supported class of \(P\).
\end{proofE}

While the existence of stable classes is precisely proved in~\cite{BS1992}, there has been no formal proof for the existence of strict stable (supported) classes so far.
The work by~\cite{BS1993} considers (finite) stable classes and suggests that verifying the existence of strict stable classes in such cases is straightforward, but even under these constraints, the derivation is not always evident indeed.
Thanks to the relationships with \(\subseteq\)-minimal stable (resp.\ supported) trap sets established in the previous section, we can easily derive the existence of strict stable (resp.\ supported) classes:

\begin{corollaryE}[\textbf{main result}][end, restate, text link=]\label{cor:NLP-exist-strict-StC-SuC}
	An \nlp \(P\) always has a strict stable (or supported) class.
\end{corollaryE}
\begin{proofE}[text proof=]
	This immediately follows from Theorem~\ref{theo:NLP-exist-min-trap-set} and Theorem~\ref{theo:NLP-min-St-trap-set-strict-StC} (resp.\ Theorem~\ref{theo:NLP-min-Su-trap-set-strict-SuC}).
\end{proofE}


The equivalence between the regular models and the \(\leq_s\)-minimal stable trap spaces (see Theorem~\ref{theo:NLP-RegM-s-min-StTS}) and the existence of \(\leq_s\)-minimal stable trap spaces (see Theorem~\ref{theo:NLP-exist-min-s-StTS-SuTS}) immediately entail the existence of regular models in an \nlp:

\begin{corollaryE}[][end, restate, text link=]\label{cor:NLP-RegM-exist}
	An \nlp \(P\) always has a regular model.
\end{corollaryE}
\begin{proofE}[text proof=]
	This immediately follows from Theorem~\ref{theo:NLP-RegM-s-min-StTS} and Theorem~\ref{theo:NLP-exist-min-s-StTS-SuTS}.
\end{proofE}

\cite[Proposition 5.1]{YY1994} proves the existence of regular models in an \nlp by showing that the \nlp has at least one justifiable model (i.e., stable partial model) and that \(<_{undef}\) (i.e.,\(<_s\)) is a quasi-order (i.e., a transitive and reflexive relation) on the set of justifiable models.
However, this argument is only valid for finite sets, whereas the set of stable partial models of an \nlp can be infinite.
An infinite set can have no minimal element in general (see more examples in~\cite{Spanring2017}).
For infinite sets, it should follow the axiom of choice or equivalent statements~\cite{BS2014}.
With our new semantics, we can have an elegant proof for the existence of regular models that is precise w.r.t. the use of set-theoretic axioms.

Note that for \nlps, regular models are equivalent to M-stable models~\cite{ELS1997}.
Since the existence of M-stable models is proved in~\cite[Corollary 5.4]{SZ1997}, the existence of regular models immediately follows.
Hence, our proof of Corollary~\ref{cor:NLP-RegM-exist} offers an alternative (perhaps simpler) for the proof of~\cite[Corollary 5.4]{SZ1997}.

\section{Conclusion}\label{sec:conclusion}

In this paper, we introduce the \emph{trap space semantics} as a novel and general framework for interpreting \nlps. 
Extending prior work on \pnames and their correspondence with Boolean networks, we generalize the concept of trap spaces to apply to arbitrary \nlps, including those with infinite Herbrand bases.
This generalization provides a unified, semantics-based foundation that captures both model-theoretic and dynamical perspectives on \nlps.

We formally define the trap space semantics and established its foundational properties.
Notably, we demonstrate that it encompasses and connects existing model-theoretic and dynamical semantics.
Through this unification, the trap space semantics clarifies the relationships among these diverse approaches and provides a systematic means to compare and interpret them.
See Figure~\ref{fig:summary-relationships} for a summary of the main relationships shown in Section~\ref{sec:TS-relationships}.

\begin{figure}[!ht]
\centering
\begin{tikzpicture}[
	node distance=1.5cm and 1cm,
	every node/.style={draw, rounded corners=5pt, font=\small, align=center, inner sep=6pt, black},
	blackarrow/.style={->, thick, color=black},
	]
	
	\node (stableclass) {stable\\class};
	\node[right=of stableclass] (stablemodel) {stable\\model};
	\node[right=of stablemodel] (lstable) {L-stable\\model};
	\node[right=of lstable] (regular) {regular\\model};
	\node[right=of regular] (partial) {stable\\partial model};
	
	\node[below=of stablemodel,yshift=-1cm] (supportedmodel) {supported\\ model};
	\node[left=of supportedmodel] (supportedclass) {supported\\class};
	\node[below=of partial,yshift=-1cm] (spmodel) {supported\\partial model};
	\node[right=of partial] (StTS) {stable\\trap space};
	\node[below=of StTS,yshift=-1cm] (SuTS) {supported\\trap space};
	
	\draw[blackarrow] (stablemodel) -- (stableclass);
	\draw[blackarrow] (stablemodel) -- (lstable);
	\draw[blackarrow] (lstable) -- (regular);
	\draw[blackarrow] (regular) -- (partial);
	\draw[blackarrow] (supportedmodel) -- (supportedclass);
	\draw[blackarrow] (supportedmodel) -- (spmodel);
	\draw[blackarrow] (spmodel) -- (partial);
	\draw[blackarrow] (supportedmodel.north) -- (stablemodel.south);
	\draw[blackarrow] (partial) -- (StTS);
	\draw[blackarrow] (spmodel) -- (SuTS);
	
	\draw[] (stableclass) edge [bend left=45,dashed] node [midway, above, fill=white] {Corollary~\ref{cor:NLP-min-TS-contain-strict-class}: each stable trap space contains at least on strict stable class} (StTS);
	
	\draw[] (supportedclass) edge [bend right=45,dashed] node [midway, below, fill=white] {Corollary~\ref{cor:NLP-min-TS-contain-strict-class}: each supported trap space contains at least on strict supported class} (SuTS);
	
	\draw[] (regular) edge [bend left=30,dashed] node [midway, above, fill=white] {Theorem~\ref{theo:NLP-RegM-s-min-StTS}: regular models = \(\leq_s\)-minimal stable trap spaces} (StTS);
	
	\draw[] (spmodel) edge [bend right=30,dashed] node [midway, below, fill=white,xshift=-1cm] {Corollary~\ref{cor:NLP-min-s-SuPM-min-s-SuTS}: \(\leq_s\)-minimal supported partial models \\= \(\leq_s\)-minimal supported trap spaces} (SuTS);
	
	\draw[] (lstable) edge [bend right=30,dashed] node [midway, below, fill=white,xshift=-1cm] {Theorem~\ref{theo:NLP-L-StM-u-min-StTS}: L-stable models \\= \(\leq_u\)-minimal supported trap spaces} (StTS);
	
\end{tikzpicture}
\caption{Summary of the main relationships shown in Section~\ref{sec:TS-relationships}. A full arc indicates a set-inclusion relation, whereas a dashed line along with a box indicate the relationship between the two end vertices.}
\label{fig:summary-relationships}
\end{figure}
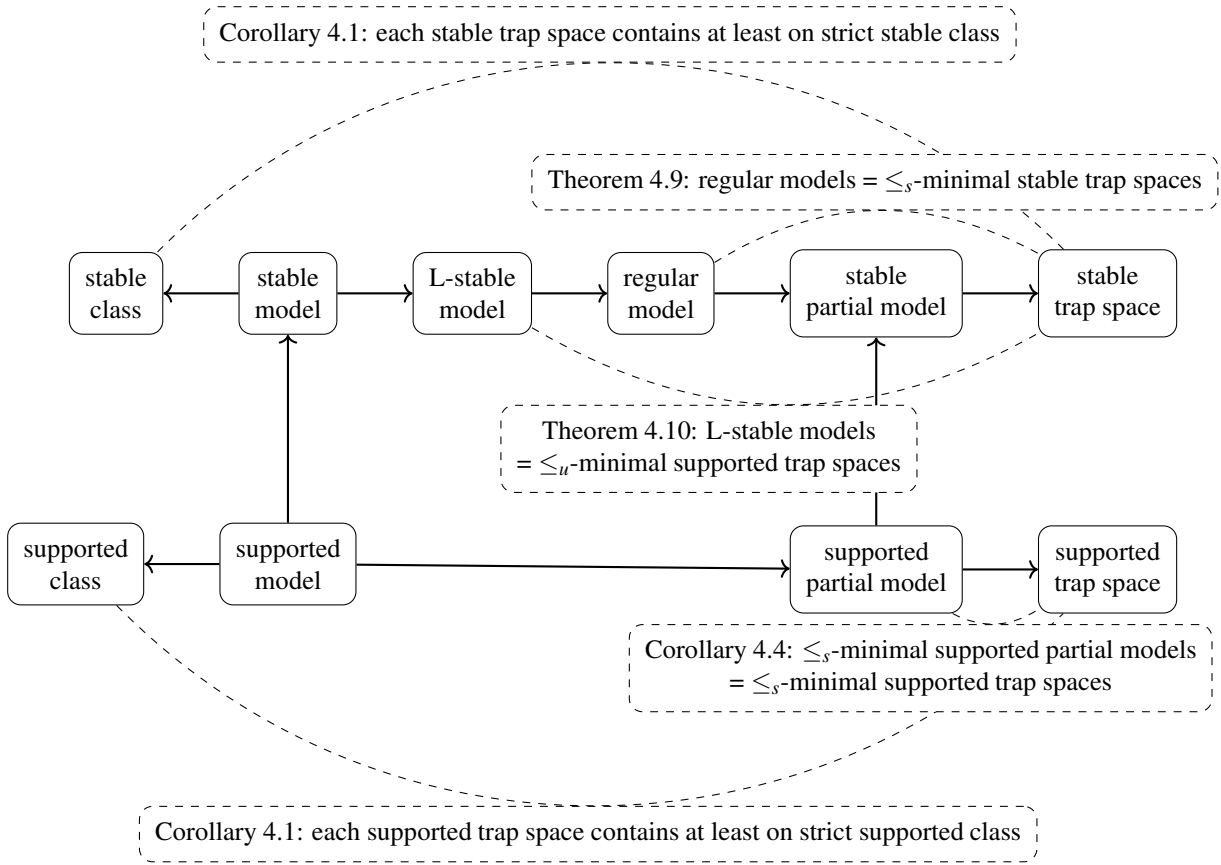

Another significant contribution is the use of the trap space semantics as a unified framework for establishing existence results.
We demonstrate how this framework enables rigorous proofs of the existence of supported classes, strict supported classes, strict stable classes, and regular models—results which were previously either absent or only informally stated in the literature.

More broadly, the trap space semantics advances the field by offering a conceptual bridge between the model-theoretic and dynamical views of \nlps. 
It allows us to reason simultaneously about fixed points of semantics-based operators and the dynamical behavior of \nlps over time.
This duality opens up new avenues for understanding the behavior of \nlps comprehensively.

Looking forward, future research should focus on extending the trap space semantics to richer classes of programs, such as disjunctive logic programs, constraint logic programs, and extended logic programs.
Furthermore, there are many other semantic proposals, for example, those that consider abstract properties such as cumulativity~\cite{Dix1995a,Dix1995b}.
Relating the trap space semantics to such semantics is interesting and can help clarify the semantic landscape.
Finally, it is worth exploring the semantics' applications in areas that benefit from the interplay between logic and dynamics, including knowledge representation, verification, and systems biology.

%

\bibliographystyle{eptcs}
\bibliography{ref}

\newpage
\appendix

\section{Detailed Proofs}\label{sec:appendix-detailed-proofs}

This appendix shows the detailed proofs of the results shown in the main text of the present paper.

\printProofs
\section{Revisit of the Characterizations of Strict Stable and Supported Classes}\label{sec:char-st-su-class}

\renewcommand{\thetheorem}{\Alph{section}.\arabic{theorem}}
\renewcommand{\thedefinition}{\Alph{section}.\arabic{definition}}
\renewcommand{\theproposition}{\Alph{section}.\arabic{proposition}}
\renewcommand{\thelemma}{\Alph{section}.\arabic{lemma}}

We first revisit the basic definitions and results of~\cite{IS2012} on stable and supported classes.

\begin{definition}[\cite{IS2012}]\label{def:NLP-orbit}
	Consider an \nlp \(P\) and a two-valued interpretation \(I\) of \(P\).
	A sequence of applications of an operator on the set of two-valued interpretations of \(P\) is called an \emph{orbit}.
	The orbit of \(I\) w.r.t.\ \(T_P\) is the sequence \(\langle T_P^k(I)\rangle_{k \in \omega}\), where \(T_P^0(I) = I\) and \(T_P^{k + 1}(I) = T_P(T_P^{k}(I))\).
	Similarly, we have the definition for the orbit of \(I\) w.r.t.\ \(F_P\).
\end{definition}

\begin{definition}[\cite{IS2012}]\label{def:NLP-orbit-elements}
	Consider an \nlp \(P\) and a two-valued interpretation \(I\) of \(P\).
	Let \(\mathcal{T}_{P}(I)\) be the set of two-valued interpretations defined as: 
		\[
			\mathcal{T}_{P}(I) = \{T_P^k(I) \mid k \in \omega\}.
		\]
	Let \(\mathcal{F}_{P}(I)\) be the set of two-valued interpretations defined as: 
		\[
			\mathcal{F}_{P}(I) = \{F_P^k(I) \mid k \in \omega\}.
		\]
\end{definition}

\begin{theorem}[\cite{IS2012}]\label{theo:NLP-strict-StC-SuC-orbit}
	Given an \nlp \(P\), a non-empty set \(S\) of two-valued interpretations is a strict stable (resp.\ supported) class of \(P\) \ifftext \(\mathcal{T}_{P}(I) = S\) (resp.\ \(\mathcal{F}_{P}(I) = S\)) for every \(I \in S\).
\end{theorem}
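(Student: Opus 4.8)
The plan is to prove the statement for supported classes, pairing a supported class with the forward orbit $\mathcal{T}_P(I)=\{T_P^k(I)\mid k\in\omega\}$ under its own operator $T_P$; the stable case is symmetric, replacing $T_P,\mathcal{T}_P$ by $F_P,\mathcal{F}_P$ as in the paper's other proofs. Fix a non-empty set $S$ of two-valued interpretations. My first move is to reduce to the already-available characterization of strict classes as $\subseteq$-minimal trap sets (Theorem~\ref{theo:NLP-min-Su-trap-set-strict-SuC}): it then suffices to show that $S$ is a $\subseteq$-minimal supported trap set \ifftext $\mathcal{T}_P(I)=S$ for every $I\in S$. The elementary observation powering both directions is that every forward orbit is a supported trap set: $\mathcal{T}_P(I)$ is non-empty (it contains $I$) and closed under $T_P$, since $T_P(T_P^k(I))=T_P^{k+1}(I)\in\mathcal{T}_P(I)$, so it qualifies by Definition~\ref{def:NLP-trap-set}.

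For the backward direction, assume $\mathcal{T}_P(I)=S$ for every $I\in S$. Then $S$ is a supported trap set, being equal to a forward orbit. To see it is $\subseteq$-minimal, take any supported trap set $S'\subseteq S$ and pick $I\in S'$; closure of $S'$ gives $\mathcal{T}_P(I)\subseteq S'$, while $I\in S$ gives $\mathcal{T}_P(I)=S$, whence $S\subseteq S'$ and $S'=S$. Theorem~\ref{theo:NLP-min-Su-trap-set-strict-SuC} then yields that $S$ is a strict supported class. For the forward direction, assume $S$ is a strict supported class; by the same theorem it is a $\subseteq$-minimal supported trap set, and for any $I\in S$ the orbit $\mathcal{T}_P(I)$ is a supported trap set contained in $S$ (as $S$ is closed and $I\in S$), so $\subseteq$-minimality forces $\mathcal{T}_P(I)=S$.

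I expect the forward direction to be the main obstacle, and its difficulty is genuinely infinitary. Granting Theorem~\ref{theo:NLP-min-Su-trap-set-strict-SuC} the step is immediate, so the real content has merely been pushed back into that characterization --- ultimately into the existence of $\subseteq$-minimal supported trap sets (Lemma~\ref{lem:NLP-Su-trap-set-min-s}), whose Zorn-based argument must handle $\supseteq$-chains of trap sets whose intersection could a priori be empty and therefore fail to be a trap set at all. This is precisely the infinite-structure pitfall that the orbit-based reasoning of~\cite{IS2012} appears to miss: there the forward direction is implicitly justified by the finite intuition that a surjective self-map of $S$ is a bijection whose cycle decomposition reduces, under strictness, to a single cycle traversed by every orbit --- reasoning that collapses when $S$ is infinite, since a strict class could in principle be realised as a two-sided orbit $(s_k)_{k\in\mathbb{Z}}$ with $T_P(s_k)=s_{k+1}$, admitting no proper supported subclass yet exhausted by no single forward orbit. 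I would therefore regard the crux of a correct proof as verifying, via the trap-set/Zorn machinery of Section~\ref{sec:TS-defs-properties}, that this pathology cannot occur --- which is exactly the correction this appendix is set up to develop, and which determines whether the statement holds verbatim or must be reformulated.
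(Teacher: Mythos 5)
The paper never actually proves this statement: it is quoted from \cite{IS2012}, and Appendix~\ref{sec:char-st-su-class} replaces it by Theorem~\ref{theo:NLP-min-Su-trap-set-Su-closure} and Theorem~\ref{theo:NLP-min-St-trap-set-St-closure}, which are obtained by factoring through ``\(S\) is a \(\subseteq\)-minimal supported (resp.\ stable) trap set'' exactly as you propose. Your route is in fact identical to the paper's: since \(T_P\) is a single-valued map, the transfinite closure \(\Pi^{I}_{P}\uparrow\alpha_I\) is already stationary at stage \(\omega\) and equals the forward orbit \(\mathcal{T}_P(I)\), so the paper's ``corrected'' characterization is literally your chain ``strict class \(\Leftrightarrow\) \(\subseteq\)-minimal trap set \(\Leftrightarrow\) every forward orbit equals \(S\)'', and your two elementary observations (orbits are trap sets; a trap set containing \(I\) contains \(\mathcal{T}_P(I)\)) are the content of Lemma~\ref{lem:NLP-Su-trap-set-closure-closed} and Lemma~\ref{lem:NLP-closure-is-Su-trap-set}. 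Both of your directions are sound granting Theorem~\ref{theo:NLP-min-Su-trap-set-strict-SuC}. (Minor point: as printed, the statement pairs \(\mathcal{T}_P\) with stable classes and \(\mathcal{F}_P\) with supported ones, which is crossed relative to Definition~\ref{def:NLP-orbit-elements}; you silently read it the intended way.)

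The decisive issue is the one you raise yourself at the end, and your worry is justified rather than resolved by the paper. The forward direction rests entirely on the backward direction of Theorem~\ref{theo:NLP-min-Su-trap-set-strict-SuC}, hence on Lemma~\ref{lem:NLP-Su-trap-set-min-s}; the Zorn argument there checks that the intersection of a \(\supseteq\)-chain of trap sets is closed under \(T_P\) but not that it is non-empty, and a chain of tails of an injective infinite orbit has empty intersection, hence no upper bound in the poset of trap sets, so Zorn does not apply. Your two-sided-orbit pathology is realizable, already in the paper's Example~\ref{exam:NLP-all}: for \(P=\{p(X)\leftarrow p(s(X))\}\), writing \(a_i=p(s^i(0))\), the operator acts as the left shift \(T_P(I)(a_i)=I(a_{i+1})\). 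Take \(x\colon\mathbb{Z}\to\{0,1\}\) with \(x_k=1\) \ifftext \(k\) is a non-negative perfect square, so that no right tail of \(x\) is periodic, and set \(s_n=\{a_i\mid x_{n+i}=1\}\). The \(s_n\) are pairwise distinct with \(T_P(s_n)=s_{n+1}\), and a non-empty subset \(\{s_n\mid n\in A\}\) of \(S=\{s_n\mid n\in\mathbb{Z}\}\) is a supported class \ifftext \(A+1=A\), i.e.\ \(A=\mathbb{Z}\); hence \(S\) is a strict supported class, yet \(\mathcal{T}_P(s_0)=\{s_n\mid n\geq 0\}\subsetneq S\) and \(S\) contains no \(\subseteq\)-minimal trap set at all. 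So the forward direction of the statement fails for general \nlps (and with it Theorem~\ref{theo:NLP-min-Su-trap-set-strict-SuC} and Lemma~\ref{lem:NLP-Su-trap-set-min-s} as stated); your argument is complete only when \(\hb{P}\) is finite, and your closing suggestion that the statement must be reformulated, not merely re-proved, is the correct conclusion.
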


Theorem~\ref{theo:NLP-strict-StC-SuC-orbit} provides an alternative characterization for strict stable (or supported) classes. 
This theorem mainly relies on the definitions of stable and supported orbits (see Definition~\ref{def:NLP-orbit}). 
This may be a problem here, as these definitions only limit the length of the sequence to \(\omega\) (the first infinite limit ordinal).
Since the sequence can traverse through the set of all two-valued interpretations, which may be an infinite set, thus the length can be a bigger ordinal.
With Definition~\ref{def:NLP-orbit}, it seems that Theorem~\ref{theo:NLP-strict-StC-SuC-orbit} only consider \pnames rather than any \nlps.
We fix the above problem as follows.

First, we recursively define the stable or supported reachable closure of a two-valued interpretation over all ordinals.

\begin{definition}\label{def:NLP-Su-St-reachable-closure}
	Given an \nlp \(P\), let \(I\) is a two-valued interpretation of \(P\).
	The \emph{supported reachable closure} of \(I\) is defined as: 
			\begin{align*}
					\Pi^{I}_{P} \uparrow 0 &= \{I\} \\
					\Pi^{I}_{P} \uparrow \alpha + 1 &= \Pi^{I}_{P} \uparrow \alpha \cup \bigcup_{J \in \Pi^{I}_{P} \uparrow \alpha}\{\Top{J}\}.
				\end{align*}
	If \(\beta\) is a limit ordinal, then by construction \(\{\Pi^{I}_{P} \uparrow \alpha \colon \alpha < \beta\}\) is a \(\subseteq\) chain.
	Define \(\Pi^{I}_{P} \uparrow \beta = \text{sup}_{\subseteq}\{\Pi^{I}_{P} \uparrow \alpha \colon \alpha < \beta\}\), where \(\text{sup}_{\subseteq}\) denotes the least upper bound w.r.t. \(\subseteq\).
	By replacing \(\Top{J}\) and \(\Pi^{I}_{P}\) with \(\Fop{J}\) and \(\Sigma^{I}_{P}\) respectively, we obtain the definition of the \emph{stable reachable closure} of \(I\).
\end{definition}


Then the two following propositions show that the stable or supported reachable closure eventually reaches a constant \wrttext any two-valued interpretation.
The key to their proofs is the use of Hartogs' lemma (see Lemma~\ref{lem:Hartogs}).

\begin{proposition}\label{prop:NLP-Su-reachable-closure-constant}
	Consider an \nlp \(P\).
	For any two-valued interpretation \(I\) of \(P\), there is the unique least ordinal \(\alpha_{I}\) such that \(\Pi^{I}_{P} \uparrow \alpha_{I} + 1 = \Pi^{I}_{P} \uparrow \alpha_{I}\).
\end{proposition}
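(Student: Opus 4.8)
The plan is to reuse, essentially verbatim, the transfinite argument behind Theorem~\ref{theo:NLP-SuTS-reach-unique-SuPM}, transported from the lattice of supported trap spaces to the powerset of two-valued interpretations. The guiding idea is that $\alpha \mapsto \Pi^{I}_{P} \uparrow \alpha$ is a $\subseteq$-non-decreasing map from the ordinals into the fixed set $2^{2^{\hb{P}}}$ (subsets of two-valued interpretations), so it cannot keep growing strictly through all the ordinals; the first stage at which it fails to grow is exactly the ordinal $\alpha_{I}$ we seek, and Hartogs' lemma (Lemma~\ref{lem:Hartogs}) supplies the bound that forces such a stage to exist.

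First I would record that the sequence is non-decreasing: $\Pi^{I}_{P} \uparrow \alpha \subseteq \Pi^{I}_{P} \uparrow \alpha + 1$ holds at successor steps directly from the recursive clause of Definition~\ref{def:NLP-Su-St-reachable-closure}, while at a limit ordinal $\beta$ the identity $\Pi^{I}_{P} \uparrow \beta = \bigcup_{\alpha < \beta}\Pi^{I}_{P} \uparrow \alpha$ makes every earlier stage a subset of $\Pi^{I}_{P} \uparrow \beta$; this also confirms that the family really is a $\subseteq$-chain, so the $\text{sup}_{\subseteq}$ in the definition is just the union. Next I would argue by contradiction: suppose $\Pi^{I}_{P} \uparrow \alpha + 1 \neq \Pi^{I}_{P} \uparrow \alpha$ for \emph{every} ordinal $\alpha$. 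Combined with non-decreasingness this yields $\Pi^{I}_{P} \uparrow \alpha \subsetneq \Pi^{I}_{P} \uparrow \alpha + 1$ at each successor, and transitivity then gives $\Pi^{I}_{P} \uparrow \gamma \subsetneq \Pi^{I}_{P} \uparrow \delta$ whenever $\gamma < \delta$. Hence the stage map is injective as a class function on the ordinals, so its restriction to the Hartogs ordinal of $X = 2^{2^{\hb{P}}}$ would be an injection from that ordinal into $X$ — contradicting Lemma~\ref{lem:Hartogs}, which asserts that this ordinal admits no injection into $X$. Therefore the class of ordinals $\alpha$ with $\Pi^{I}_{P} \uparrow \alpha + 1 = \Pi^{I}_{P} \uparrow \alpha$ is non-empty, and, being a non-empty class of ordinals, it has a unique least element, which I take as $\alpha_{I}$.

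The main obstacle, and the reason I favour injecting the stages $\Pi^{I}_{P} \uparrow \alpha$ themselves rather than newly appearing interpretations, is to keep the argument free of the axiom of choice: picking, for each $\alpha$, a witness in $\Pi^{I}_{P} \uparrow \alpha + 1 \setminus \Pi^{I}_{P} \uparrow \alpha$ across a proper class of ordinals would smuggle in a choice principle, whereas strict $\subseteq$-monotonicity already makes the stage map injective outright, and Hartogs' lemma is a theorem of $\mathsf{ZF}$. A secondary point to get right is the limit-stage bookkeeping just mentioned; and although it is not strictly needed for the stated claim, I would also remark that once $\Pi^{I}_{P} \uparrow \alpha_{I}$ is closed under $\Top{\cdot}$ it remains constant at all later ordinals, so $\alpha_{I}$ is a genuine stabilisation point. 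The stable version, with $\Sigma^{I}_{P}$ and $\Fop{\cdot}$ in place of $\Pi^{I}_{P}$ and $\Top{\cdot}$, is entirely symmetric.
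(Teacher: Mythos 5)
Your proof is correct and follows essentially the same route as the paper's: an increasing ordinal-indexed sequence of stages cannot be strictly increasing without yielding an injection from an arbitrarily large ordinal into a fixed set, contradicting Hartogs' lemma, so it stabilises and the least stabilisation ordinal exists by well-ordering. Your version is in fact slightly more careful, since you correctly take the codomain to be \(2^{2^{\hb{P}}}\) (sets of two-valued interpretations) where the paper writes \(2^{\hb{P}}\), and you make the choice-free character of the injectivity argument explicit.
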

\begin{proof}
	We have \(\Pi^{I}_{P} \uparrow\) is an increasing function from the ordinals into \(2^{\hb{P}}\).
	It cannot be strictly increasing, as if it was, then we would have an injection from the ordinals into \(2^{\hb{P}}\), which violates Lemma~\ref{lem:Hartogs}.
	Hence, \(\Pi^{I}_{P} \uparrow\) must be eventually constant.
	It follows that for some ordinal \(\alpha\), \(\Pi^{I}_{P} \uparrow \alpha + 1 = \Pi^{I}_{P} \uparrow \alpha\).
	Let \(\alpha_{I}\) be the least ordinal such that \(\Pi^{I}_{P} \uparrow \alpha_{I} + 1 = \Pi^{I}_{P} \uparrow \alpha_{I}\).
	Since the ordinals are ordered, \(\alpha_{I}\) is unique.
\end{proof}

\begin{proposition}\label{prop:NLP-St-reachable-closure-constant}
	Consider an \nlp \(P\).
	For any two-valued interpretation \(I\) of \(P\), there is the unique least ordinal \(\alpha_{I}\) such that \(\Sigma^{I}_{P} \uparrow \alpha_{I} + 1 = \Sigma^{I}_{P} \uparrow \alpha_{I}\).
\end{proposition}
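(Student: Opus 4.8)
The plan is to mirror, step for step, the argument already given for Proposition~\ref{prop:NLP-Su-reachable-closure-constant}, substituting the stable update $\Fop{\cdot}$ and the stable reachable closure $\Sigma^{I}_{P}$ for their supported counterparts $\Top{\cdot}$ and $\Pi^{I}_{P}$. First I would observe, directly from Definition~\ref{def:NLP-Su-St-reachable-closure}, that $\Sigma^{I}_{P}\uparrow$ is a $\subseteq$-increasing (i.e.\ non-decreasing) function from the ordinals into the power set $2^{\hb{P}}$: at each successor stage we form the union of $\Sigma^{I}_{P}\uparrow\alpha$ with the sets $\{\Fop{J}\}$ ranging over its current members, so $\Sigma^{I}_{P}\uparrow\alpha \subseteq \Sigma^{I}_{P}\uparrow(\alpha+1)$, and at a limit ordinal the value is defined to be the $\subseteq$-least upper bound of the earlier values, which preserves monotonicity.

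Next I would argue that this function cannot be \emph{strictly} increasing. Were it strictly increasing, then for each ordinal $\alpha$ one could select some element of $\Sigma^{I}_{P}\uparrow(\alpha+1)\setminus\Sigma^{I}_{P}\uparrow\alpha$, yielding an injection from the ordinals into the fixed set $2^{\hb{P}}$; this contradicts Hartogs' lemma (Lemma~\ref{lem:Hartogs}), which furnishes a least ordinal admitting no injection into $2^{\hb{P}}$. Hence $\Sigma^{I}_{P}\uparrow$ must be eventually constant, so there exists an ordinal $\alpha$ with $\Sigma^{I}_{P}\uparrow(\alpha+1)=\Sigma^{I}_{P}\uparrow\alpha$. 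Since the ordinals are well-ordered, the collection of such stabilization ordinals has a least element $\alpha_{I}$, and well-ordering makes this least element unique.

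I expect no genuine obstacle here beyond what was already dispatched in the supported case: the whole argument is purely set-theoretic and does not depend on any special properties of $F_P$ (in particular, it uses neither monotonicity nor anti-monotonicity of $F_P$). The only point that requires a small amount of care is the monotonicity at limit ordinals, which is precisely what the $\subseteq$-supremum clause in Definition~\ref{def:NLP-Su-St-reachable-closure} guarantees. In a full write-up I would therefore simply remark that the proof is identical to that of Proposition~\ref{prop:NLP-Su-reachable-closure-constant} under the substitution of $\Fop{J}$ for $\Top{J}$ and $\Sigma^{I}_{P}$ for $\Pi^{I}_{P}$.
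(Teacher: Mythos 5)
Your proposal is correct and matches the paper's proof, which itself simply says to replace \(\Pi^{I}_{P}\) by \(\Sigma^{I}_{P}\) in the argument for the supported case: monotonicity of the closure, Hartogs' lemma to rule out strict increase, eventual constancy, and well-ordering of the ordinals to get a unique least stabilization ordinal. Your remark that picking a witness in \(\Sigma^{I}_{P}\uparrow(\alpha+1)\setminus\Sigma^{I}_{P}\uparrow\alpha\) yields the required injection is a slightly more explicit rendering of the same step.
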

\begin{proof}
	Replace \(\Pi^{I}_{P}\) with \(\Sigma^{I}_{P}\) in the proof of Proposition~\ref{prop:NLP-Su-reachable-closure-constant}.
\end{proof}

By using transfinite induction (see more details in~\cite{Lloyd1984}), we prove that the supported reachable closure is closed under the supported trap set property.

\begin{lemma}\label{lem:NLP-Su-trap-set-closure-closed}
	Consider an \nlp \(P\).
	Let \(S\) be a supported trap set of \(P\).
	Then for all \(I \in S\), \(\Pi^{I}_{P} \uparrow \alpha \subseteq S\) for all ordinal \(\alpha\).
\end{lemma}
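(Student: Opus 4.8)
The plan is to prove the statement by transfinite induction on the ordinal $\alpha$, using the single defining property of a supported trap set: namely, that $J \in S$ implies $\Top{J} \in S$ (this is exactly the condition $\{\Top{J} \mid J \in S\} \subseteq S$ from Definition~\ref{def:NLP-trap-set}). Since $\Pi^{I}_{P} \uparrow$ is itself defined by transfinite recursion with three clauses (zero, successor, limit), the induction will mirror that structure clause by clause. I fix an arbitrary $I \in S$ at the outset and prove $\Pi^{I}_{P} \uparrow \alpha \subseteq S$ for every ordinal $\alpha$.

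For the base case $\alpha = 0$, I would simply observe that $\Pi^{I}_{P} \uparrow 0 = \{I\}$, which is contained in $S$ because $I \in S$ by hypothesis. For the successor step, I assume the induction hypothesis $\Pi^{I}_{P} \uparrow \alpha \subseteq S$ and consider $\Pi^{I}_{P} \uparrow \alpha + 1 = \Pi^{I}_{P} \uparrow \alpha \cup \bigcup_{J \in \Pi^{I}_{P} \uparrow \alpha}\{\Top{J}\}$. The first component lies in $S$ directly by the hypothesis. For the second component, each index $J$ ranges over $\Pi^{I}_{P} \uparrow \alpha \subseteq S$, so $J \in S$, and hence $\Top{J} \in S$ because $S$ is a supported trap set; a union of singletons each contained in $S$ is contained in $S$. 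Therefore $\Pi^{I}_{P} \uparrow \alpha + 1 \subseteq S$.

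For a limit ordinal $\beta$, I would invoke the induction hypothesis in the form $\Pi^{I}_{P} \uparrow \gamma \subseteq S$ for all $\gamma < \beta$, together with the fact that at limit stages $\Pi^{I}_{P} \uparrow \beta = \text{sup}_{\subseteq}\{\Pi^{I}_{P} \uparrow \gamma \colon \gamma < \beta\}$ is just the union $\bigcup_{\gamma < \beta}\Pi^{I}_{P} \uparrow \gamma$; an arbitrary union of subsets of $S$ is again a subset of $S$, giving $\Pi^{I}_{P} \uparrow \beta \subseteq S$. By the principle of transfinite induction, the containment holds for all ordinals $\alpha$, and since $I \in S$ was arbitrary, this completes the argument.

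I do not expect a genuine obstacle here: the result is a clean monotonicity/closure fact and every step reduces to the trap-set property plus elementary set inclusions. The only point requiring care is the \emph{bookkeeping of the transfinite induction}, in particular making sure the limit clause is handled by the union characterization of the supremum (rather than attempting a successor-style argument), so that the proof is valid for the full ordinal range and not merely up to $\omega$---which is precisely the gap in the $\omega$-bounded orbit definitions that Section~\ref{sec:char-st-su-class} sets out to repair.
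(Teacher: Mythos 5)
Your proof is correct and follows essentially the same route as the paper's: a transfinite induction on $\alpha$ with the base, successor, and limit clauses handled exactly as in the paper's argument (the successor step via the trap-set closure property, the limit step via the union characterization of the supremum). No meaningful differences to report.
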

\begin{proof}
	We prove this by using transfinite induction.
	
	Base case: \(\Pi^{I}_{P} \uparrow 0 = \{I\} \subseteq S\).
	
	Successor case: \(\Pi^{I}_{P} \uparrow \alpha + 1  = \Pi^{I}_{P} \uparrow \alpha \cup \bigcup_{J \in \Pi^{I}_{P} \uparrow \alpha}\{\Top{J}\} \subseteq S \cup \bigcup_{J \in S}\{\Top{J}\}\) because \(\Pi^{I}_{P} \uparrow \alpha \subseteq S\) by the induction hypothesis.
	Then \(\Pi^{I}_{P} \uparrow \alpha + 1 \subseteq S \cup S = S\) because \(S\) is a supported trap set.
	
	Limit case: \(\Pi^{I}_{P} \uparrow \alpha = \text{sup}_{\subseteq}\{\Pi^{I}_{P} \uparrow \beta \colon \beta < \alpha\} \subseteq S\) because \(\Pi^{I}_{P} \uparrow \beta \subseteq S\) for all \(\beta < \alpha\) by the induction hypothesis.
\end{proof}

On the other hand, we prove that the constant that the supported reachable closure reaches is a supported trap set \wrttext any two-valued interpretation.

\begin{lemma}\label{lem:NLP-closure-is-Su-trap-set}
	Given an \nlp \(P\), for any two-valued interpretation \(I\) of \(P\), \(\Pi^{I}_{P} \uparrow \alpha_{I}\) is a supported trap set of \(P\).
\end{lemma}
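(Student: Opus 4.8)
The plan is to read off the supported trap set condition directly from the defining fixpoint property of the ordinal $\alpha_{I}$. Recall from Definition~\ref{def:NLP-Su-St-reachable-closure} that the supported reachable closure is a $\subseteq$-increasing transfinite sequence, and from Proposition~\ref{prop:NLP-Su-reachable-closure-constant} that $\alpha_{I}$ is precisely the least ordinal at which this sequence becomes constant, i.e.\ $\Pi^{I}_{P} \uparrow \alpha_{I} + 1 = \Pi^{I}_{P} \uparrow \alpha_{I}$. The heavy lifting (guaranteeing that such an $\alpha_{I}$ exists at all) has already been carried out via Hartogs' lemma in Proposition~\ref{prop:NLP-Su-reachable-closure-constant}, so here I only need to exploit this stabilization.

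First I would verify non-emptiness, which is immediate: $I \in \Pi^{I}_{P} \uparrow 0 = \{I\}$, and since the closure is $\subseteq$-increasing we have $\{I\} \subseteq \Pi^{I}_{P} \uparrow \alpha_{I}$, so $\Pi^{I}_{P} \uparrow \alpha_{I}$ contains $I$ and is non-empty. Next I would unfold the successor step of Definition~\ref{def:NLP-Su-St-reachable-closure} at $\alpha_{I}$, namely
\[
\Pi^{I}_{P} \uparrow \alpha_{I} + 1 = \Pi^{I}_{P} \uparrow \alpha_{I} \cup \bigcup_{J \in \Pi^{I}_{P} \uparrow \alpha_{I}}\{\Top{J}\}.
\]
Combining this with the stabilization equality $\Pi^{I}_{P} \uparrow \alpha_{I} + 1 = \Pi^{I}_{P} \uparrow \alpha_{I}$ yields $\bigcup_{J \in \Pi^{I}_{P} \uparrow \alpha_{I}}\{\Top{J}\} \subseteq \Pi^{I}_{P} \uparrow \alpha_{I}$, which is exactly the condition $\{\Top{J} \mid J \in \Pi^{I}_{P} \uparrow \alpha_{I}\} \subseteq \Pi^{I}_{P} \uparrow \alpha_{I}$ required by Definition~\ref{def:NLP-trap-set} for $\Pi^{I}_{P} \uparrow \alpha_{I}$ to be a supported trap set of $P$.

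I do not anticipate a genuine obstacle in this particular lemma: once $\alpha_{I}$ is known to exist, the argument is a one-line unfolding of the successor clause together with the fixpoint equality. The only point requiring care is bookkeeping—making sure to cite Proposition~\ref{prop:NLP-Su-reachable-closure-constant} for the stabilization and to note non-emptiness explicitly, since Definition~\ref{def:NLP-trap-set} insists the trap set be non-empty. In spirit, this lemma is the ``closedness'' counterpart to Lemma~\ref{lem:NLP-Su-trap-set-closure-closed}, which showed that the closure stays inside any given supported trap set; together the two establish that $\Pi^{I}_{P} \uparrow \alpha_{I}$ is the smallest supported trap set containing $I$.
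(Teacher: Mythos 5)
Your proof is correct and follows essentially the same route as the paper's: unfold the successor clause of Definition~\ref{def:NLP-Su-St-reachable-closure} at \(\alpha_{I}\), apply the stabilization equality \(\Pi^{I}_{P} \uparrow \alpha_{I} + 1 = \Pi^{I}_{P} \uparrow \alpha_{I}\) from Proposition~\ref{prop:NLP-Su-reachable-closure-constant}, and read off \(\{\Top{J} \mid J \in \Pi^{I}_{P} \uparrow \alpha_{I}\} \subseteq \Pi^{I}_{P} \uparrow \alpha_{I}\). Your explicit check of non-emptiness is a small tidiness the paper omits but which Definition~\ref{def:NLP-trap-set} does formally require.
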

\begin{proof}
	By definition of \(\Pi^{I}_{P}\), we have \(\Pi^{I}_{P} \uparrow \alpha_{I} + 1 = \Pi^{I}_{P} \uparrow \alpha_{I} \cup \bigcup_{J \in \Pi^{I}_{P} \uparrow \alpha_{I}}\{\Top{J}\}\).
	By definition of \(\alpha_{I}\), we have \(\Pi^{I}_{P} \uparrow \alpha_{I} + 1 = \Pi^{I}_{P} \uparrow \alpha_{I}\).
	Set \(S = \Pi^{I}_{P} \uparrow \alpha_{I}\).
	It follows that \(S = S \cup \bigcup_{J \in S}\{\Top{J}\}\).
	Then \(\bigcup_{J \in S}\{\Top{J}\} \subseteq S\), implying that \(S\) is a supported trap set of \(P\).
\end{proof}

Combining Lemma~\ref{lem:NLP-Su-trap-set-closure-closed} and Lemma~\ref{lem:NLP-closure-is-Su-trap-set}, we now obtain a new characterization for \(\subseteq\)-minimal supported trap sets of an \nlp:

\begin{theorem}\label{theo:NLP-min-Su-trap-set-Su-closure}
	Given an \nlp \(P\), \(S\) is a \(\subseteq\)-minimal supported trap set of \(P\) \ifftext \(\Pi^{I}_{P} \uparrow \alpha_{I} = S\) for all \(I \in S\).
\end{theorem}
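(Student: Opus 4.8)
The plan is to prove the biconditional directly, treating each direction separately and leaning on the two structural lemmas already established: Lemma~\ref{lem:NLP-Su-trap-set-closure-closed} (a supported trap set containing \(I\) also contains every stage \(\Pi^{I}_{P} \uparrow \alpha\)) and Lemma~\ref{lem:NLP-closure-is-Su-trap-set} (the stabilized closure \(\Pi^{I}_{P} \uparrow \alpha_{I}\) is itself a supported trap set). The guiding observation is that \(\Pi^{I}_{P} \uparrow \alpha_{I}\) is always a non-empty supported trap set (it contains \(I\), since \(\Pi^{I}_{P} \uparrow 0 = \{I\}\) and the closure is \(\subseteq\)-increasing), so it is exactly the kind of object against which \(\subseteq\)-minimality can be tested.

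For the forward direction ``\(\Rightarrow\)'', I would assume \(S\) is a \(\subseteq\)-minimal supported trap set and fix an arbitrary \(I \in S\). Applying Lemma~\ref{lem:NLP-Su-trap-set-closure-closed} with this \(S\) and \(I\) gives \(\Pi^{I}_{P} \uparrow \alpha_{I} \subseteq S\). By Lemma~\ref{lem:NLP-closure-is-Su-trap-set}, \(\Pi^{I}_{P} \uparrow \alpha_{I}\) is a supported trap set, and it is non-empty as noted above. Thus \(\Pi^{I}_{P} \uparrow \alpha_{I}\) is a non-empty supported trap set contained in \(S\), so the \(\subseteq\)-minimality of \(S\) forces \(\Pi^{I}_{P} \uparrow \alpha_{I} = S\). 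Since \(I\) was arbitrary, this holds for all \(I \in S\).

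For the backward direction ``\(\Leftarrow\)'', I would assume \(\Pi^{I}_{P} \uparrow \alpha_{I} = S\) for every \(I \in S\). Picking any \(I \in S\) immediately shows \(S = \Pi^{I}_{P} \uparrow \alpha_{I}\) is a supported trap set by Lemma~\ref{lem:NLP-closure-is-Su-trap-set}. For minimality, take any non-empty supported trap set \(S' \subseteq S\) and any \(I \in S' \subseteq S\). Lemma~\ref{lem:NLP-Su-trap-set-closure-closed} applied to \(S'\) yields \(\Pi^{I}_{P} \uparrow \alpha_{I} \subseteq S'\), while the hypothesis (using \(I \in S\)) gives \(\Pi^{I}_{P} \uparrow \alpha_{I} = S\); hence \(S \subseteq S'\), and with \(S' \subseteq S\) we conclude \(S' = S\). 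Therefore \(S\) admits no proper non-empty supported-trap-subset, i.e.\ it is \(\subseteq\)-minimal.

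I do not expect a genuine obstacle here, since the transfinite and set-theoretic heavy lifting is absorbed into Proposition~\ref{prop:NLP-Su-reachable-closure-constant} (existence of \(\alpha_{I}\)) and the two lemmas. The only point demanding minor care is ensuring the closure stages are non-empty so that \(\Pi^{I}_{P} \uparrow \alpha_{I}\) genuinely qualifies as a supported trap set in the forward direction, and keeping the quantifier ``for all \(I \in S\)'' aligned in both directions so that each fixed \(I\) feeds the correct instance of Lemma~\ref{lem:NLP-Su-trap-set-closure-closed}.
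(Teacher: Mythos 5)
Your proof is correct and follows essentially the same route as the paper's: both directions rest on exactly the two lemmas you cite, with the forward direction testing the non-empty trap set \(\Pi^{I}_{P} \uparrow \alpha_{I} \subseteq S\) against the minimality of \(S\), and the backward direction applying Lemma~\ref{lem:NLP-Su-trap-set-closure-closed} inside a hypothetical smaller trap set \(S'\) to force \(S \subseteq S'\). The only cosmetic differences are that you argue directly where the paper argues by contradiction, and you obtain the trap-set property of \(S\) in the backward direction by citing Lemma~\ref{lem:NLP-closure-is-Su-trap-set} rather than unfolding the successor-stage equation; your explicit remark that the closure stages contain \(I\) and are therefore non-empty is a small point the paper leaves implicit.
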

\begin{proof}
	``\(\Rightarrow\)'' Assume that \(S\) is a \(\subseteq\)-minimal supported trap set of \(P\).
	Suppose that there exists \(I \in S\) such that \(\Pi^{I}_{P} \uparrow \alpha_{I} \neq S\).
	By Lemma~\ref{lem:NLP-Su-trap-set-closure-closed}, \(\Pi^{I}_{P} \uparrow \alpha_{I} \subseteq S\) since \(S\) is a supported trap set.
	Then \(\Pi^{I}_{P} \uparrow \alpha_{I} \subset S\).
	By Lemma~\ref{lem:NLP-closure-is-Su-trap-set}, \(\Pi^{I}_{P} \uparrow \alpha_{I}\) is a supported trap set, which contradicts the \(\subseteq\)-minimality of \(S\).
	Hence, \(\Pi^{I}_{P} \uparrow \alpha_{I} = S\) for all \(I \in S\).
	
	``\(\Leftarrow\)'' Assume that \(\Pi^{I}_{P} \uparrow \alpha_{I} = S\) for all \(I \in S\).
	We have \(\Pi^{I}_{P} \uparrow \alpha_{I} + 1 = S = S \cup \bigcup_{J \in S}\{\Top{J}\}\).
	This implies that \(\bigcup_{J \in S}\{\Top{J}\} \subseteq S\), thus \(S\) is a supported trap set of \(P\).
	Suppose that \(S\) is not \(\subseteq\)-minimal.
	Then there is a supported trap set \(S'\) of \(P\) such that \(S' \subset S\).
	Consider \(J \in S'\).
	By Lemma~\ref{lem:NLP-Su-trap-set-closure-closed}, \(\Pi^{I}_{P} \uparrow \alpha_{J} \subseteq S' \subset S\), which is a contradiction because \(J \in S\).
	Hence, \(S\) is a \(\subseteq\)-minimal supported trap set of \(P\).
\end{proof}

This is similar for stable trap sets by replacing \(\Pi^{I}_{P}\) by \(\Sigma^{I}_{P}\) in the proof of Theorem~\ref{theo:NLP-min-Su-trap-set-Su-closure}:

\begin{theorem}\label{theo:NLP-min-St-trap-set-St-closure}
	Given an \nlp \(P\), \(S\) is a \(\subseteq\)-minimal stable trap set of \(P\) \ifftext \(\Sigma^{I}_{P} \uparrow \alpha_{I} = S\) for all \(I \in S\).
\end{theorem}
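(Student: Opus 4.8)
Given an \nlp \(P\), \(S\) is a \(\subseteq\)-minimal stable trap set of \(P\) \ifftext \(\Sigma^{I}_{P} \uparrow \alpha_{I} = S\) for all \(I \in S\).

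The plan is to mirror exactly the structure of the proof of Theorem~\ref{theo:NLP-min-Su-trap-set-Su-closure}, replacing the supported operator \(T_P\) with the stable operator \(F_P\) throughout, and correspondingly replacing the supported reachable closure \(\Pi^{I}_{P}\) with the stable reachable closure \(\Sigma^{I}_{P}\). The entire infrastructure needed for the supported case was built in a way that is symmetric in \(T_P\) and \(F_P\): Definition~\ref{def:NLP-Su-St-reachable-closure} already defines \(\Sigma^{I}_{P}\) by the very same transfinite recursion, and Proposition~\ref{prop:NLP-St-reachable-closure-constant} already guarantees the existence of the least ordinal \(\alpha_I\) with \(\Sigma^{I}_{P}\uparrow\alpha_I+1 = \Sigma^{I}_{P}\uparrow\alpha_I\) (proved by the identical Hartogs'-lemma argument). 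So the two essential building blocks I need are the \(F_P\)-analogues of Lemma~\ref{lem:NLP-Su-trap-set-closure-closed} and Lemma~\ref{lem:NLP-closure-is-Su-trap-set}, and those transfer verbatim.

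Concretely, I would first restate the closure lemma for the stable case: if \(S\) is a stable trap set then \(\Sigma^{I}_{P}\uparrow\alpha \subseteq S\) for all \(I\in S\) and all ordinals \(\alpha\). Its proof is transfinite induction with the base case \(\{I\}\subseteq S\), the successor case using that \(S\) is a stable trap set (so \(\Fop{J}\in S\) whenever \(J\in S\)), and the limit case using the \(\subseteq\)-supremum definition. Second, I would restate that the fixed point is a stable trap set: \(\Sigma^{I}_{P}\uparrow\alpha_I\) is a stable trap set, because \(\Sigma^{I}_{P}\uparrow\alpha_I+1 = \Sigma^{I}_{P}\uparrow\alpha_I\) forces \(\bigcup_{J\in\Sigma^{I}_{P}\uparrow\alpha_I}\{\Fop{J}\}\subseteq\Sigma^{I}_{P}\uparrow\alpha_I\). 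With these in hand, the two directions of the biconditional follow exactly as in Theorem~\ref{theo:NLP-min-Su-trap-set-Su-closure}: for ``\(\Rightarrow\)'', if some \(\Sigma^{I}_{P}\uparrow\alpha_I\subsetneq S\) it would be a smaller stable trap set, contradicting minimality; for ``\(\Leftarrow\)'', the hypothesis shows \(S\) is a stable trap set, and any strictly smaller stable trap set \(S'\) would contain some \(J\) with \(\Sigma^{J}_{P}\uparrow\alpha_J\subseteq S'\subsetneq S\), contradicting \(J\in S\).

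Because every ingredient is a symmetric restatement already established for \(T_P\), there is essentially no genuine obstacle here — the work is verifying that nothing in the supported proof relied on a property of \(T_P\) that fails for \(F_P\). The only point worth checking is that \(\Fop{J}\) is well-defined as a single two-valued interpretation for every two-valued \(J\) (i.e.\ that \(F_P\) is a total function on two-valued interpretations), which holds since \(P^J\) is positive and has a unique \(\leq_t\)-least two-valued model. Given that, the recursion defining \(\Sigma^{I}_{P}\) and all three induction cases go through unchanged, so the cleanest presentation is simply to invoke the substitution, exactly as the paper does for its other symmetric stable/supported pairs.

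\begin{proof}
Replace \(\Top{J}\), \(\Pi^{I}_{P}\), ``supported trap set'', Lemma~\ref{lem:NLP-Su-trap-set-closure-closed}, and Lemma~\ref{lem:NLP-closure-is-Su-trap-set} by \(\Fop{J}\), \(\Sigma^{I}_{P}\), ``stable trap set'', their stable analogues, and argue as in the proof of Theorem~\ref{theo:NLP-min-Su-trap-set-Su-closure}. This is sound because \(\Fop{J}\) is a total function on two-valued interpretations (as \(P^J\) is positive with a unique \(\leq_t\)-least two-valued model), and the existence of the least ordinal \(\alpha_I\) with \(\Sigma^{I}_{P}\uparrow\alpha_I+1=\Sigma^{I}_{P}\uparrow\alpha_I\) is guaranteed by Proposition~\ref{prop:NLP-St-reachable-closure-constant}.
\end{proof}
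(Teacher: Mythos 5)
Your proposal is correct and matches the paper's own proof, which likewise reduces the stable case to Theorem~\ref{theo:NLP-min-Su-trap-set-Su-closure} by substituting \(\Sigma^{I}_{P}\) for \(\Pi^{I}_{P}\) (and \(\Fop{J}\) for \(\Top{J}\)), relying on the stable analogues of Lemma~\ref{lem:NLP-Su-trap-set-closure-closed} and Lemma~\ref{lem:NLP-closure-is-Su-trap-set} together with Proposition~\ref{prop:NLP-St-reachable-closure-constant}. Your version is in fact slightly more explicit than the paper's one-line substitution, since you spell out the analogous lemmas and verify that \(F_P\) is a total function, but the argument is the same.
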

\begin{proof}
	Replace \(\Pi^{I}_{P}\) by \(\Sigma^{I}_{P}\) in the proof of Theorem~\ref{theo:NLP-min-Su-trap-set-Su-closure}.
\end{proof}

Theorem~\ref{theo:NLP-min-Su-trap-set-Su-closure} and Theorem~\ref{theo:NLP-min-St-trap-set-St-closure} are the corrected version of Theorem~\ref{theo:NLP-strict-StC-SuC-orbit} in the case of \nlps, as we show that the \(\subseteq\)-minimal supported (resp.\ stable) trap sets are equivalent to strict supported (resp.\ stable) classes (see Theorem~\ref{theo:NLP-min-St-trap-set-strict-StC} and Theorem~\ref{theo:NLP-min-Su-trap-set-strict-SuC}).
In the case of \pnames, \(\hb{P}\) is finite, then \(\mathcal{T}_{P}(I) = \Pi^{I}_{P} \uparrow \omega\) (resp.\ \(\mathcal{F}_{P}(I) = \Sigma^{I}_{P} \uparrow \omega\)). 
Hence, in this case, Theorem~\ref{theo:NLP-min-Su-trap-set-Su-closure} and Theorem~\ref{theo:NLP-min-St-trap-set-St-closure} are equivalent to Theorem~\ref{theo:NLP-strict-StC-SuC-orbit}.
\section{Revisit of the Existence of L-stable Models}\label{sec:existence-L-stable}

\renewcommand{\thetheorem}{\Alph{section}.\arabic{theorem}}
\renewcommand{\theexample}{\Alph{section}.\arabic{example}}
\renewcommand{\theconjecture}{\Alph{section}.\arabic{conjecture}}

The existence of L-stable models in an \nlp is stated and proved in~\cite[Proposition 5.11]{SZ1997}.
The main idea of the proof is similar to what is used for proving the existence of M-stable models in an \nlp (see~\cite[Theorem 5.3]{SZ1997}).
However, this may pose a problem here because the case of L-stable models is actually not the same as the case of M-stable models.
More specifically, the union of all the partial stable models in a maximal chain w.r.t. \(\subseteq\)-minimal \(I^{\textbf{u}}\) might be not a partial stable model (even the union does not exist, see Example~\ref{exam:Datalog-non-exist-union-maximal-chain-u}), whereas the union of all the stable partial models in a \(\leq_i\)-chain is a stable partial model as shown in the proof of~\cite[Theorem 5.3]{SZ1997}.
This problem is more justified by the bijection between the L-stable models of an \nlp and the semi-stable extensions of an Abstract Argumentation Framework (\af)~\cite{CSAD2015}.
It has been shown that there is an infinite \af without a semi-stable extension~\cite{BS2014}.

\begin{example}\label{exam:Datalog-non-exist-union-maximal-chain-u}
	Consider the \nlp \(P = \{a \leftarrow \dng{b}; b \leftarrow \dng{a}; c \leftarrow \dng{c}\}\).
	The program \(P\) has three stable partial models: \(I_1 = \{a = \uval, b = \uval, c = \uval\}\), \(I_2 = \{a = 0, b = 1, c = \uval\}\), and \(I_3 = \{a = 1, b = 0, c = \uval\}\).
	Using the notations of~\cite{SZ1997}, we have \(I_1 = \langle \emptyset, \emptyset\rangle\), \(I_1 = \langle\{b\}, \{a\}\rangle\), and \(I_1 = \langle\{a\}, \{b\}\rangle\).
	Every maximal chain w.r.t. \(\subseteq\)-minimal \(I^{\textbf{u}}\) of \(P\) contains all \(I_1\), \(I_2\), and \(I_3\).
	However, the union of all elements in such a chain is \(\langle\{a, b\}, \{a, b\}\rangle\), which does not represent a valid three-valued interpretation.
	Hence, the union does not exist.
\end{example}

We make the following two conjectures, inspired by existing related results in abstract argumentation frameworks~\cite{BS2014,Spanring2017}.

\begin{conjecture}\label{conj:NLP-internal-variable-free-L-StM-exist}
	An internal-variable-free \nlp \(P\) has at least one L-stable model.
\end{conjecture}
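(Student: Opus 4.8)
The plan is to prove existence by exhibiting a $\leq_u$-minimal stable trap space and invoking Theorem~\ref{theo:NLP-L-StM-u-min-StTS}, which identifies the L-stable models of $P$ with the $\leq_u$-minimal elements of $\stts{P}$. Thus it suffices to show that the non-empty set $\stts{P}$ (non-empty by Proposition~\ref{prop:NLP-exist-St-Su-TS}), ordered by the reverse order $\geq_u$, possesses a maximal element. Since $I_1 \leq_u I_2 \iff I_1^{\textbf{u}} \subseteq I_2^{\textbf{u}}$, this is equivalent to finding a $\subseteq$-minimal set in the family $\mathcal{U} = \{I^{\textbf{u}} \mid I \in \stts{P}\}$; I would quotient $\stts{P}$ by $\sim_u$ (equality of undefined sets) to turn the preorder into a genuine partial order and then apply Zorn's lemma (Lemma~\ref{lem:Zorn-lemma}).

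The hypotheses of Zorn's lemma then reduce to the following chain condition: given a $\subseteq$-descending chain $(U_k)_k$ in $\mathcal{U}$ with witnessing stable trap spaces $I_k$ (so $I_k^{\textbf{u}} = U_k$), there must exist a single stable trap space $J$ whose undefined set is contained in $U := \bigcap_k U_k$, i.e.\ $J$ must be two-valued on every atom of $\hb{P} \setminus U$. First I would record that $U \in \mathcal{U}$ cannot be taken for granted: exactly as in Example~\ref{exam:Datalog-non-exist-union-maximal-chain-u}, an atom of $\hb{P}\setminus U$ may receive value $\fval$ in some $I_j$ and value $\tval$ in some $I_k$ of the chain, so the naive ``limit'' assignment, or the union of the $\cset{I_k}$, need not determine a consistent three-valued interpretation. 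Reconciling these conflicting definite values into one stable trap space is the crux of the argument, and is precisely where the \emph{internal-variable-free} hypothesis must be used.

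Concretely, I would exploit that internal-variable-freeness confers \textbf{finite support}: when the head of a rule of $P$ is grounded, all body variables are instantiated as well, so each $a \in \hb{P}$ is the head of at most $|P|$ ground rules, each with a finite body, and hence $\rhs{a}$ is a \emph{finite} propositional formula over finitely many atoms. With $\hb{P}$ countable, $2^{\hb{P}}$ carries the compact, metrizable product topology, and finite support makes $T_P$ continuous; the intended route is a compactness/K\"onig-style selection that picks, for each $a \in \hb{P}\setminus U$, a definite value occurring cofinally along the chain so that the resulting interpretation is still closed under the update operator, i.e.\ a stable trap space below $U$. This mirrors the classical fact that \emph{finitary} abstract argumentation frameworks always admit a semi-stable extension, and dovetails with the bijection between L-stable models and semi-stable extensions of~\cite{CSAD2015}; since infinite frameworks without semi-stable extensions exist~\cite{BS2014}, some such restriction is genuinely necessary.

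The step I expect to be the main obstacle---and the reason the statement is left as a conjecture---is the last one: the finite support of $T_P$ does not immediately transfer to the \emph{stable} operator $F_P$, since the least-model reduct introduces unbounded positive recursion, and the reduction to the negative program $\lfp{P}$ via Corollary~\ref{cor:NLP-lfp-StTS} destroys finite support. Hence a direct compactness argument on $F_P$ is not available, and one must instead argue on $T_P$ for the negative image while controlling the possibly transfinite length of the chain. Making the selection of consistent definite values rigorous for arbitrary descending $\leq_u$-chains, rather than for $\omega$-sequences where compactness applies cleanly, is the delicate point that would need to be settled to convert this plan into a proof.
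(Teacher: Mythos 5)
This statement is not proved in the paper: it is stated explicitly as an open conjecture, and the authors only establish the special case of uni-rule programs (Theorem~\ref{theo:uni-rule-NLP-exist-L-StM}), noting that both this conjecture and Conjecture~\ref{conj:NLP-L-StM-non-exist} remain open. Your proposal, by your own admission in its final paragraph, is a plan rather than a proof, and the gap you identify is genuine and is exactly the one that blocks the argument. The reduction to finding a \(\leq_u\)-minimal element of \(\stts{P}\) via Theorem~\ref{theo:NLP-L-StM-u-min-StTS} and Proposition~\ref{prop:NLP-u-order-subset} is sound, and you are right that the Zorn chain condition fails naively (as Example~\ref{exam:Datalog-non-exist-union-maximal-chain-u} warns, atoms definite along the chain may flip between \(\fval\) and \(\tval\), so the ``limit'' need not be a well-defined interpretation, let alone a stable trap space). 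But the compactness/K\"onig-style selection that is supposed to resolve this is never carried out, and you correctly observe that the finite-support property of \(T_P\) conferred by internal-variable-freeness does not survive the passage to \(F_P\) or to \(\lfp{P}\): the least fixpoint transformation can produce infinitely many negative rules with the same head, so the resulting structure is no longer finitary and the compactness argument has no foothold. Since that is precisely the step on which existence hinges, the proposal does not prove the conjecture; it reformulates it.

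For comparison, the paper's partial result takes a different and more concrete route for the uni-rule case: it applies the least fixpoint transformation, observes that uni-ruleness is preserved so that \(\lfp{P}\) (after propagating undefined-head atoms) corresponds to a \emph{finitary} abstract argumentation framework, and then imports \cite[Theorem~14]{BS2014} on the existence of semi-stable extensions in finitary \afs, transported back through the labelling/extension correspondence \cite{CG2009,CSAD2015} and Theorems~\ref{theo:neg-NLP-StPM-SuPM} and~\ref{theo:NLP-lfp-model-equivalence}. The reason this works for uni-rule programs and not (yet) for internal-variable-free ones is exactly the obstruction you flagged: uni-ruleness guarantees that each atom of \(\lfp{P}\) has a single rule with a finite body, hence finitely many attackers, whereas internal-variable-freeness only bounds the rules per head in \(\gr{P}\), not in \(\lfp{P}\). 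If you want a provable statement along the lines of your plan, the realistic target is the uni-rule case (or, more ambitiously, identifying a syntactic class closed under \(\Pi_P\) with finitary image); settling the conjecture as stated would be a contribution beyond the paper.
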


\begin{conjecture}\label{conj:NLP-L-StM-non-exist}
	There exists an \nlp \(P\) such that \(P\) has no L-stable model.
\end{conjecture}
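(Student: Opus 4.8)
The plan is to import a known non-existence phenomenon from abstract argumentation, exactly as the discussion preceding the conjecture suggests. By~\cite{CSAD2015} the L-stable models of an \nlp correspond to the semi-stable extensions of an associated \af, and by~\cite{BS2014} there is an infinite \af admitting no semi-stable extension. The goal is therefore to realise such a framework as the ground instantiation of a genuine (hence \emph{finite}) \nlp $P$ and to show that the correspondence survives the passage to the infinite structure, so that the absence of semi-stable extensions becomes the absence of L-stable models of $P$.

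First I would fix a concrete infinite \af $F = (A, R)$ from~\cite{BS2014} whose attack relation is \emph{finitely branching}, i.e.\ every argument has only finitely many attackers, and whose structure is arithmetically regular, say with $A$ indexed by $\mathbb{N}$ and $R$ definable by finitely many conditions on indices. Finite branching is forced on us because a rule body mentions only finitely many negated atoms. I would then apply the standard argumentation-to-logic-program translation~\cite{WCG2009,CSAD2015}: one atom per argument, and for each argument $x$ a rule $x \gets \dng{y_1}, \dots, \dng{y_k}$ whose negative body lists exactly the attackers of $x$. Encoding the index $\mathbb{N}$ with a constant $0$ and a successor symbol $s$, the regular attack pattern of $F$ can be written as a finite family of schematic rules over the terms $s^i(0)$, so that the infinite ground instantiation \(\gr{P}\) reproduces the intended translation of $F$. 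Producing this finite schematisation is where I expect the bulk of the work to lie, since it constrains which infinite $F$ can be used.

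Next I would check that the correspondence lifts to this infinite ground program: every semi-stable extension of $F$ induces an L-stable model of $P$ and conversely, with the undefined set $I^{\textbf{u}}$ mirroring the complement of the range of the matching extension. Rather than relying on~\cite{CSAD2015} being formulated for the infinite case, the cleanest route is to re-prove the needed direction with the machinery of the present paper: by Corollary~\ref{cor:NLP-L-StM-u-min-StPM} an L-stable model is precisely a \(\leq_u\)-minimal stable partial model, so it suffices to match \(\leq_u\)-minimality of $I^{\textbf{u}}$ with \(\subseteq\)-maximality of the range of complete extensions. From $F$ having no semi-stable extension it then follows that $P$ has no \(\leq_u\)-minimal stable partial model, i.e.\ no L-stable model.

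The main obstacle is the one already made concrete in Example~\ref{exam:Datalog-non-exist-union-maximal-chain-u}: a \(\subseteq\)-maximal chain of stable partial models ordered by $I^{\textbf{u}}$ need not admit an upper bound that is itself a stable partial model, so no Zorn-style or fixpoint argument can recover existence, and the construction must make this failure \emph{unbounded} rather than merely local. Should the \af detour prove awkward, an alternative entirely within this paper is available: by the contrapositive of Corollary~\ref{cor:NLP-num-L-StM-RegM} it suffices to build an \nlp whose regular models---equivalently, by Theorem~\ref{theo:NLP-RegM-s-min-StTS}, whose \(\leq_s\)-minimal stable trap spaces---have undefined sets forming a strictly \(\subseteq\)-descending sequence $U_1 \supsetneq U_2 \supsetneq \cdots$ that no regular model undercuts, thereby exhibiting infinitely many regular models but no \(\leq_u\)-minimal one. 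A promising skeleton is an infinite, coupled family of the gadget of Example~\ref{exam:Datalog-non-exist-union-maximal-chain-u}, arranged so that each successive choice defines one more atom while never defining them all.
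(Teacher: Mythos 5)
The first thing to note is that the paper does not prove this statement at all: it is explicitly left open (``this conjecture and Conjecture~\ref{conj:NLP-L-StM-non-exist} are still open''), so there is no proof of the paper's to compare yours against. What follows is therefore an assessment of your proposal on its own terms.

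Your main route is self-defeating. You require the infinite \af $F$ to be finitary, correctly observing that finite rule bodies force this under the standard translation. But \cite[Theorem 14]{BS2014} --- the very result this paper invokes in the proof of Theorem~\ref{theo:uni-rule-NLP-exist-L-StM} --- states that every \emph{finitary} \af has a semi-stable extension; the known \afs without semi-stable extensions are necessarily non-finitary, so the object you propose to ``fix'' from~\cite{BS2014} does not exist. Worse, any program produced by the standard translation (one rule per argument, with the negative body listing exactly its attackers) is uni-rule and negative, hence falls squarely under Theorem~\ref{theo:uni-rule-NLP-exist-L-StM} and is \emph{guaranteed} to have an L-stable model. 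The \af detour therefore cannot yield a counterexample; a witness, if one exists, must let some atom head infinitely many ground rules, which is exactly what the standard \af encoding forbids (and is why Conjecture~\ref{conj:NLP-internal-variable-free-L-StM-exist} singles out internal-variable-free programs as the plausibly safe class --- a point your sketch does not engage with). Your fallback --- exhibiting infinitely many regular models none of which is $\leq_u$-minimal, via Proposition~\ref{prop:NLP-L-StM-u-min-RegM} --- is the right reduction, but the ``infinite coupled family of the gadget of Example~\ref{exam:Datalog-non-exist-union-maximal-chain-u}'' is a one-sentence sketch: no finite rule schema is given, and no argument is offered that every regular model of the resulting ground program is $\leq_u$-undercut by another. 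That construction is the entire content of the conjecture, and it is missing.
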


We here prove a special case of Conjecture~\ref{conj:NLP-internal-variable-free-L-StM-exist}, however this conjecture and Conjecture~\ref{conj:NLP-L-StM-non-exist} are still open.
This result is meaningful because 1) the class of uni-rule \nlps is easily identifiable by a syntatic characterization, 2) uni-rule \pnames are as hard as general \pnames w.r.t the stable model semantics, since identifying whether a uni-rule \pname has a stable model or not is NP-complete~\cite{SS1997}, and 3) this class is straightforwardly related to (possibly infinite) \afs~\cite{Dung1995,CG2009}.

\begin{theorem}\label{theo:uni-rule-NLP-exist-L-StM}
	Consider an \nlp \(P\).
	If \(P\) is uni-rule, then it has at least one L-stable model.
\end{theorem}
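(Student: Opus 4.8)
The plan is to reduce the claim to a compactness statement about the supported trap spaces of the associated negative program, in which the uni-rule hypothesis manifests as a \emph{finitary} condition. By Theorem~\ref{theo:NLP-L-StM-u-min-StTS}, \(I\) is an L-stable model of \(P\) \ifftext \(I\) is a \(\leq_u\)-minimal stable trap space of \(P\); by Corollary~\ref{cor:NLP-lfp-StTS} the stable trap spaces of \(P\) and of \(\lfp{P}\) coincide, and since \(\lfp{P}\) is negative, Corollary~\ref{cor:neg-NLP-StTS-SuTS} identifies them with the supported trap spaces of \(\lfp{P}\). Hence it suffices to exhibit a \(\leq_u\)-minimal supported trap space of \(\lfp{P}\). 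First I would check that the least fixpoint transformation preserves the uni-rule property: each \(\Pi_P\)-substitution is deterministic because every atom has a unique defining rule, and the stages grow monotonically, so in \(\lfp{P}\) every atom is the head of at most one rule, with a \emph{finite} (negative) body. Equivalently, in \(\lfp{P}\) every atom depends on only finitely many others through \(\rhs{\cdot}\).

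The key step is to realise \(\suts{\lfp{P}}\) as a compact space. I would topologise the set \(\threed{}^{\hb{P}}\) of all three-valued interpretations as a product of copies of the finite discrete space \(\threed{}\); by Tychonoff's theorem it is compact. Using the model-theoretic characterisation of Theorem~\ref{theo:NLP-char-SuTS}, \(\suts{\lfp{P}} = \{I : I(\rhs{a}) \leq_s I(a)\text{ for all }a \in \hb{P}\}\). Because \(\lfp{P}\) is uni-rule, each constraint \(I(\rhs{a}) \leq_s I(a)\) involves \(a\) together with the finitely many atoms occurring in \(\rhs{a}\); it therefore defines a clopen subset, and \(\suts{\lfp{P}}\)---an intersection of such sets---is \emph{closed}, hence compact. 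This is precisely where the uni-rule hypothesis is indispensable: with infinitely many ``attackers'' the constraint would no longer be finitary, the trap-space set would fail to be closed, and the argument would collapse, mirroring the known non-existence of semi-stable extensions in general infinite argumentation frameworks.

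With compactness in hand I would minimise the undefined set. Writing \(\mathcal{F} = \{I^{\textbf{u}} : I \in \suts{\lfp{P}}\}\), which is non-empty by Proposition~\ref{prop:NLP-exist-St-Su-TS}, I would order \(\mathcal{F}\) by reverse inclusion and verify that its chains are bounded, so that Lemma~\ref{lem:Zorn-lemma} yields a \(\subseteq\)-minimal member. Given a \(\subseteq\)-descending chain \((U_\alpha)\) realised by trap spaces \((I_\alpha)\), the sets \(C_\alpha = \{J \in \suts{\lfp{P}} : J^{\textbf{u}} \subseteq U_\alpha\}\) are closed, non-empty (each contains \(I_\alpha\)), and totally ordered by inclusion, so they have the finite intersection property; by compactness \(\bigcap_\alpha C_\alpha\) contains some \(I^*\), a supported trap space with \((I^*)^{\textbf{u}} \subseteq \bigcap_\alpha U_\alpha\), i.e.\ a lower bound of the chain in \(\mathcal{F}\). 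A \(\subseteq\)-minimal element of \(\mathcal{F}\) is then realised by a \(\leq_u\)-minimal supported trap space of \(\lfp{P}\) (Proposition~\ref{prop:NLP-u-order-subset}), which by the reduction above is an L-stable model of \(P\).

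The main obstacle is the closedness/compactness step for \(\suts{\lfp{P}}\): everything hinges on turning the uni-rule hypothesis into the statement that the trap-space constraints are finitary, which in turn requires the easy-but-not-entirely-free verification that \(\lfp{P}\) inherits the uni-rule property and has finite rule bodies. Once this is secured, the topological finite-intersection argument cleanly replaces the naive ``union of a maximal chain'' construction that fails in Example~\ref{exam:Datalog-non-exist-union-maximal-chain-u}, and the remainder is a routine application of Zorn's lemma through Proposition~\ref{prop:NLP-u-order-subset}.
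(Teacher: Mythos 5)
Your proof is correct, but it takes a genuinely different route from the paper's. The paper reduces the problem to abstract argumentation: after the least-fixpoint transformation and a ``propagation'' step removing undefined atoms, it builds a \emph{finitary} \af from the resulting uni-rule negative program and invokes the external result of~\cite{BS2014} that finitary \afs admit semi-stable extensions, then transfers back through the correspondence between complete labellings and supported partial models (so it produces a \(\leq_u\)-minimal stable \emph{partial model} directly). You instead stay inside the trap space machinery: you reduce via Theorem~\ref{theo:NLP-L-StM-u-min-StTS}, Corollary~\ref{cor:NLP-lfp-StTS}, and Corollary~\ref{cor:neg-NLP-StTS-SuTS} to finding a \(\leq_u\)-minimal supported trap space of \(\lfp{P}\), and obtain it by a Tychonoff-plus-finite-intersection-property argument, noting that the uni-rule hypothesis makes each constraint \(I(\rhs{a}) \leq_s I(a)\) of Theorem~\ref{theo:NLP-char-SuTS} depend on finitely many coordinates, so \(\suts{\lfp{P}}\) is closed, hence compact, and descending chains of undefined-sets have realized lower bounds, after which Zorn's lemma finishes. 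Both proofs rest on the same two unproved-in-detail facts (that \(\lfp{P}\) inherits the uni-rule property and has finite bodies), and both exploit finitariness at the crucial step; what yours buys is self-containment --- it inlines the compactness content of the cited \af theorem and makes explicit exactly where uni-rule is indispensable --- at the cost of introducing topology, while the paper's version is shorter given the external citation and situates the result within the known \nlp--\af correspondence. Note also that your use of \(\leq_u\)-minimality is fine even though \(\leq_u\) is only a preorder on \(\twod{}\): by Proposition~\ref{prop:NLP-u-order-subset} minimality is exactly \(\subseteq\)-minimality of \(I^{\textbf{u}}\), which is what your Zorn argument delivers.
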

\begin{proof}
	Since \(P\) is uni-rule, it is easy to derive that \(\lfp{P}\) is also uni-rule thanks to the definition of the least fixpoint.
	If \(\lfp{P} = \emptyset\), then it trivially has a L-stable model \(I^{*}\) where \(I^{*}(a) = \fval\) for every \(a \in \hb{\lfp{P}}\).
	We then only consider the case \(\lfp{P} \neq \emptyset\).
	If there is an atom \(a \in \hb{\lfp{P}}\) not being the head of any rule in \(\lfp{P}\), it is always \(\fval\) in any stable or supported partial model of \(\lfp{P}\).
	Hence we can ``propagate'' such atoms to get a new \nlp \(P'\) such that the set of stable partial models of \(P'\) one-to-one corresponds to the set of stable partial models of \(\lfp{P}\).
	This implies that the set of L-stable models of \(P'\) one-to-one corresponds to the set of L-stable models of \(\lfp{P}\).
	Since \(\lfp{P}\) is uni-rule and negative, \(P'\) also does.
	There is a (possibly infinite) \af (say \(\mathcal{A}\)) corresponding to \(P'\) where the set of arguments of \(\mathcal{A}\) is \(\hb{P'}\) and each atom in the body of a rule whose head is \(a\) corresponds to an attacker of \(a\) in \(\mathcal{A}\).
	Since the body of a rule is finite by definition, \(\mathcal{A}\) is finitary, i.e., every argument has finitely many attackers~\cite{BS2014}.
	
	A labelling of an \af is identical to a three-valued interpretation of an \nlp.
	An extension of an \af is defined as a subset of arguments and correponds to a two-valued interpretation of an \nlp.
	There is a one-to-one correspondence between the set of labellings and the set of extensions of a given type~\cite{CG2009}.
	\cite[Theorem 14]{BS2014} points out that a finitary \af has at least one semi-stable extension.
	We can easily to derive the existence of semi-stable labellings in \(\mathcal{A}\).
	
	A complete labelling \(I\) is a semi-stable labelling if \(I^{\textbf{u}}\) is \(\subseteq\)-minimal~\cite{CSAD2015}.
	By the definition of complete labellings~\cite{CG2009}, we can derive that the set of complete lablellings of \(\mathcal{A}\) coincides with the set of supported partial models of \(P'\).
	Since the stable partial models of \(P'\) coincide with the supported partial models of \(P'\) by Theorem~\ref{theo:neg-NLP-StPM-SuPM}, \(P'\) has at least one L-stable model, leading to \(\lfp{P}\) does.
	Since \(P\) and \(\lfp{P}\) have the same set of stable partial models by Theorem~\ref{theo:NLP-lfp-model-equivalence}, \(P\) has at least one L-stable model.
\end{proof}

\end{document}